\newcommand{\includefigs}[1]{}
\newcommand{\mnpq}[4]{\langle #1,#2,#3\rangle=#4}
\newcommand{\bmat}{\left[ \begin{array}}
\newcommand{\emat}{\end{array} \right]}
\newtheorem{thm}{Theorem}
\newtheorem{theorem}[thm]{Theorem}
\newtheorem{lemma}[thm]{Lemma}
\newtheorem{corollary}[thm]{Corollary}
\newtheorem{proposition}[thm]{Proposition}
\newtheorem{fact}[thm]{Fact}
\newcommand{\qed}{\hfill \mbox{\raggedright \rule{.07in}{.1in}}}
\newenvironment{proof}{\vspace{1ex}\noindent{\bf Proof}\hspace{0.5em}}
	{\hfill\qed\vspace{1ex}}
\newenvironment{pfof}[1]{\vspace{1ex}\noindent{\bf Proof of #1}\hspace{0.5em}}
	{\hfill\qed\vspace{1ex}}
\newcommand{\remove}[1]{}
\newcommand{\lt}{\left}
\newcommand{\rt}{\right}
\newcommand{\ignore}[1]{}
\begin{document}

\title{Graph Expansion Analysis for Communication Costs of
Fast Rectangular Matrix Multiplication}
\author{
Grey Ballard
\thanks{EECS Department,
University of California, Berkeley, CA 94720.
(ballard@eecs.berkeley.edu).
Research supported by Microsoft (Award $\#$024263) and Intel (Award $\#$024894) funding and by matching funding by U.C. Discovery (Award $\#$DIG07-10227). Additional support comes from Par Lab affiliates National Instruments, Nokia, NVIDIA, Oracle, and Samsung.} 
\and James Demmel
\thanks{Mathematics Department and CS Division,
University of California, Berkeley, CA 94720.
(demmel@cs.berkeley.edu).
Research supported by Microsoft (Award $\#$024263) and Intel (Award $\#$024894) funding and by matching funding by U.C. Discovery (Award $\#$DIG07-10227). Additional support comes from Par Lab affiliates National Instruments, Nokia, NVIDIA, Oracle, and Samsung.
Research is also supported by DOE grants DE-SC0003959, DE- SC0004938, and DE-AC02-05CH11231.}
\and Olga Holtz
\thanks{Departments of Mathematics,
University of California, Berkeley and Technische Universit\"at
Berlin. 
(holtz@math.berkeley.edu)
Research supported by the Sofja Kovalevskaja
programme of Alexander von Humboldt Foundation and by the
National Science Foundation under agreement DMS-0635607.} 
\and Benjamin Lipshitz
\thanks{EECS Department,
University of California, Berkeley, CA 94720.
(lipshitz@berkeley.edu)
Research supported by Microsoft (Award $\#$024263) and Intel (Award $\#$024894)
funding and by matching funding by U.C. Discovery (Award $\#$DIG07-10227).} 
\and Oded Schwartz
\thanks{EECS Department,
University of California, Berkeley, CA 94720.
(odedsc@eecs.berkeley.edu)
Research supported by U.S. Department of Energy grants under
Grant Numbers DE-SC0003959.}
}
\maketitle

\abstract{Graph expansion analysis of computational DAGs is useful for obtaining communication cost lower bounds where previous methods, such as geometric embedding, are not applicable. This has recently been demonstrated for Strassen's and Strassen-like fast square matrix multiplication algorithms.  Here we extend the expansion analysis approach to fast algorithms for rectangular matrix multiplication, obtaining a new class of communication cost lower bounds.  These apply, for example to the algorithms of Bini et~al.~(1979) and the algorithms of Hopcroft and Kerr (1971).
Some of our bounds are proved to be optimal.  
}

\section{Introduction}

The time cost of an algorithm, sequential or parallel, depends not only on how many computational operations it executes but also on how much data it moves.  In fact, the cost of data movement, or \emph{communication}, is often much more expensive than the cost of computation.  Architectural trends predict that computation cost will continue to decrease exponentially faster than communication cost, leading to ever more algorithms that are dominated by the communication costs.  Thus, in order to minimize running times, algorithms should be designed with careful consideration of their communication costs.  To that end, we discuss asymptotic costs of algorithms in terms of both number of computations performed (flops in the case of numerical algorithms) and units of communication: \emph{words moved}.

For a sequential algorithm, we determine the communication cost incurred on a simple machine model which consists of two levels of memory hierarchy, as described in Section~\ref{sec:model}.  In many cases, na\"ive implementations of algorithms incur communication costs much higher than necessary; reformulating the algorithm to performing the same arithmetic in a different order can drastically decrease the communication costs and therefore the total running time.  In order to determine the possible improvements and identify whether an algorithm is optimal with respect to communication costs, one seeks communication lower bounds.

Hong and Kung \cite{HongKung81} were the first to prove communication lower bounds for matrix multiplication algorithms.  They show that on a two-level machine model, any algorithm which performs the $\Theta(n^3)$ flops of classical matrix multiplication must move at least $\Omega(n^3/\sqrt M)$ words between fast and slow memory, where $M$ is the number of words that can fit simultaneously in fast memory.  Irony, Toledo, and Tiskin \cite{IronyToledoTiskin04} generalized their classical matrix multiplication result to a distributed-memory parallel machine model using a \emph{geometric embedding} argument.  Ballard, Demmel, Holtz and Schwartz \cite{BallardDemmelHoltzSchwartz11a} showed this proof technique is applicable to a more general set of computations, including one-sided matrix factorizations such as LU, Cholesky, and QR and two-sided matrix factorizations which are used in eigenvalue and singular value computations, most of which perform $\Theta(n^3)$ computations in the dense matrix case.  Many of these bounds on \(\Theta(n^3)\) algorithms have been shown to be optimal.

However, the geometric embedding approach does not seem to apply to computations which do not map to a simple geometric computation space.  In the case of classical matrix multiplication and other $O(n^3)$ algorithms, the computation corresponds to a three-dimensional lattice.  In particular, the geometric embedding approach does not readily apply to Strassen's algorithm for matrix multiplication that requires $O(n^{\log_2 7})$ flops.  Instead, Ballard, Demmel, Holtz, and Schwartz \cite{BallardDemmelHoltzSchwartz11b} show that a different proof technique based on analysis of the expansion properties of the computational directed acyclic graph \emph{(CDAG)} can be used to obtain communication lower bounds for both sequential and parallel models for these algorithms.  The proof technique can also be used to bound how well the corresponding parallel algorithms can strongly-scale \cite{BallardDemmelHoltzLipshitzSchwartz12b}.  We use this same approach here to prove bounds on fast rectangular matrix multiplication algorithms, which introduce some extra technical challenges.

\subsection{Expansion and communication}
The CDAG of a recursive algorithm has a recursive structure,
and thus its expansion can be analyzed combinatorially (similarly
to what is done for expander graphs in \cite{Mihail89,AlonSchwartzShapira08,KouckyKabanetsKolokolova10}) or by spectral
analysis (in the spirit of what was done for the Zig-Zag expanders
\cite{ReingoldVadhanWigderson00}).  Analyzing the CDAG for communication cost bounds was first suggested by Hong and Kung
\cite{HongKung81}. They use the red-blue pebble game to obtain
tight lower bounds on the communication costs of many algorithms, including
classical $\Theta(n^3)$ matrix multiplication, matrix-vector multiplication, and
FFT. Their proof is obtained by considering dominator sets of the CDAG.

Other papers study connections between bounded space
computation and combinatorial expansion-related properties of the
corresponding CDAG (see e.g.,
\cite{Savage94,BilardiPreparata99,BilardiPietracaprinaD'Alberto00}
and references therein).  The study of expansion properties of a CDAG was also suggested as
one of the main motivations of Lev and Valiant \cite{LevValiant83}
in their work on superconcentrators and lower bounds on the arithmetic complexity of various problems.

\subsection{Fast rectangular matrix multiplication}
Following Strassen's algorithm for fast multiplication of square matrices \cite{Strassen69}, the arithmetic complexity of multiplying rectangular matrices has been extensively studied (see \cite{HopcroftKerr71,BiniCapovaniRomaniLotti79,Coppersmith82,LottiRomani83,HuangPan97,HuangPan98,Coppersmith97}  and further details in \cite{BurgisserClausenShokrollahi97}). When there is an algorithm for multiplying an \(m\times n\) matrix \(A\) with an \(n\times p\) matrix \(B\) to obtain an \(m\times p\) matrix \(C\) using only \(q\) scalar multiplications, we use the notation \(\mnpq mnpq\).\footnote{Recall that \(\mnpq mnpq\) implies that for all integers \(t\), \(\mnpq {m^t}{n^t}{p^t}{q^t}\) by recursion (tensor powering), and also that the arithmetic complexity of \(\langle m^t,n^t,p^t\rangle\) is \(O(q^t)\) regardless of the number of additions in \(\langle m,n,p\rangle\).}
The above studies try to minimize the number of multiplications \(q\) (as a function of $m,n,$ and $p$). A particular focus of interest is maximizing $\alpha$ so that
\(\mnpq{n}{n}{n^\alpha}{O(n^2\log n)}\) namely maximizing
the size of a rectangular matrix, so that it can be multiplied (from right) with a square matrix, in time which is only slightly more than what is needed to read the input.\footnote{Note that our approach may not apply to algorithms of the form \(\mnpq{n}{n}{n^\alpha}{O(n^2\log n)}\).  It only applies to algorithms that are a recursive application of a base-case algorithm.}
Recall that \(\langle m,n,p\rangle=\langle n,p,m\rangle=\langle p,m,n\rangle=\langle m,p,n \rangle = \langle p,n,m \rangle = \langle n,m,p \rangle\) for all \(m,n,p\) \cite{HopcroftMusinsky73}.

Rectangular matrix multiplication is used in many algorithms, for solving problems in linear algebra, in combinatorial optimization, and other areas. Utilizing fast algorithms for rectangular matrix multiplication has proved to be quite useful for improving the complexity
of solving many of those problems (a very partial list includes \cite{GalilPan89,Knight95,BelingMegiddo98,Zwick02,KratschSpinrad03,YusterZwick04,YusterZwick05,KaplanSharirVerbin06,KeZengHanPan08}).

\subsection{Communication model}
\label{sec:model}

We model communication costs on a sequential machine as follows.  Assume the machine has a fast memory of size \(M\) words and a slow memory of infinite size.  Further assume that computation can be performed only on data stored in the fast memory.  On a real computer, this model may have several interpretations and may be applied to anywhere in the memory hierarchy.  For example the slow memory might be the hard drive and the fast memory the DRAM; or the slow memory might be the DRAM and the fast memory the cache.

The goal is to minimize the number of words \(W\) transferred between fast and slow memory, which we call the communication cost of an algorithm.  Note that we minimize with respect to an algorithm, not with respect to a problem, and so the only optimization allowed is re-ordering the computation in a way that is consistent with the CDAG of the algorithm.  The sequential communication cost is closely related to communication costs in the various parallel models.  We discuss this relationship briefly in Section~\ref{sec:discussion}.

\subsection{The communication costs of rectangular matrix multiplication}\label{sec:intro:lb}
The communication costs lower bounds of rectangular matrix multiplication algorithms are determined by properties of the underlying CDAGs. 
Consider \(\langle m^t,n^t,p^t\rangle=q^t\) matrix multiplication that is generated from \(t\) tensor powers of \(\langle m,n,p\rangle=q\).  Denote the former by the {\em algorithm} and the latter by the {\em base case}, and consider their CDAGs.
They both consist of four parts: the encoding graphs of \(A\) and \(B\), the scalar multiplications, and the decoding graph of \(C\).  The encoding graphs correspond to computing linear combinations of entries of \(A\) or \(B\), and the decoding graph to computing linear combinations of the scalar products.  See Figure~\ref{fig:comp-graph} in Section~\ref{sec:exprect} for a diagram of the algorithm CDAG, and Figure~\ref{fig:binigraph} in Section~\ref{sec:examples} for an example of a base-case CDAG.
Let us state the communication cost lower bounds of the two main cases.
\begin{theorem} \label{thm:dec-con}
Let \(\mnpq{m^t}{n^t}{p^t}{q^t}\) be the algorithm obtained from a base case \(\mnpq mnpq\).  
If the decoding graph of the base case is connected, then the communication cost lower bound is 
\[W=\Omega\left(\frac{q^t}{M^{\log_{mp}q-1}}\right).\]
Further, in the case that \(n\leq m\) and \(n\leq p\) this bound is tight.
\end{theorem}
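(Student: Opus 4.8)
The plan is to use the CDAG expansion/segmentation method of \cite{BallardDemmelHoltzSchwartz11b} for the lower bound, applied to the decoding CDAG alone, and to analyze the natural recursive schedule for the matching upper bound. For the lower bound, the first step is to reduce to a purely combinatorial statement about the decoding CDAG $\mathrm{Dec}_t$ of $\mnpq{m^t}{n^t}{p^t}{q^t}$ (the part of the computation producing the $(mp)^t$ entries of $C$ from the $q^t$ scalar products). Any edge boundary in $\mathrm{Dec}_t$ is also an edge boundary in the full algorithm CDAG, so it suffices to lower bound communication using $\mathrm{Dec}_t$, and only the hypothesis on the base-case decoding graph is used. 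The graph $\mathrm{Dec}_t$ has the recursive description: take $(mp)^{t-1}$ disjoint copies $H_1,\dots,H_{(mp)^{t-1}}$ of the base-case decoding graph and $q$ disjoint copies $G_1,\dots,G_q$ of $\mathrm{Dec}_{t-1}$, and identify the $k$-th source of $H_j$ with the $j$-th sink of $G_k$; all $q^t$ scalar multiplications are then the leaves of the $G_k$'s.

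The key lemma to establish is that there is a constant $c$, depending only on the base case, so that every vertex set $S$ of $\mathrm{Dec}_t$ whose edge boundary has size at most $b$ contains at most $c\, b^{\log_{mp}q}$ scalar-multiplication vertices. Connectivity of the base-case decoding graph is used here as the statement that it has a positive edge-expansion constant, so that an embedded copy of it that $S$ meets but does not fully contain costs $S$ a fixed fraction of its vertices in boundary. The naive induction that recurses on each $G_k$ independently fails, since $x\mapsto x^{\log_{mp}q}$ is convex (as $q>mp$) and splitting and recombining the boundary budget among the $G_k$ loses a constant factor at each of the $t$ levels; instead, following \cite{BallardDemmelHoltzSchwartz11b}, one fixes a single recursion level $\ell$ with $(mp)^\ell\approx b$ and argues that inside all but $O(1)$ of the embedded copies of $\mathrm{Dec}_\ell$ the set $S$ is either almost complete or almost empty, so the near-complete copies number $O(1)$ and each carries $q^\ell=O(b^{\log_{mp}q})$ multiplications while the remaining copies contribute lower-order terms. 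Making this counting rigorous with constants uniform in $t$ — in particular for base cases that are merely connected rather than good expanders, and with the asymmetry between $mp$ and the matrix sizes — is the main obstacle; the square case of this lemma is exactly the Strassen expansion bound of \cite{BallardDemmelHoltzSchwartz11b} with the substitutions $4\to mp$, $7\to q$.

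Granting the lemma, the lower bound follows by the standard segmentation argument \cite{HongKung81,BallardDemmelHoltzSchwartz11b}: partition any execution into segments during each of which exactly $M$ words are read from slow memory and $M$ are written to it (the last possibly shorter). The vertices of $\mathrm{Dec}_t$ computed during one segment form a set $S$ whose edge boundary is $O(M)$, since every edge leaving $S$ corresponds to a value that was already in fast memory at the start of the segment, was read during it, is still in fast memory at the end, or is written out. By the lemma each segment computes $O(M^{\log_{mp}q})$ scalar products, so there are $\Omega(q^t/M^{\log_{mp}q})$ segments, hence $W=\Omega(q^t/M^{\log_{mp}q-1})$.

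For tightness when $n\le m$ and $n\le p$, I would analyze the recursive schedule: at a subproblem of size $s$, form the $q$ encoded subblocks of $A$ and of $B$, recurse on the $q$ subproblems of size $s-1$, and decode into $C$, stopping once $(mp)^s=\Theta(M)$. Because $n\le m$ and $n\le p$, the largest of $A$, $B$, $C$ at every level is $C$, of size $(mp)^s$, so the encode and decode steps at a level of size $s$ move only $O((mp)^s)$ words beyond the recursive calls, and at the cutoff the entire subproblem — whose recursive memory footprint is $\Theta((mp)^s)=\Theta(M)$ there — runs inside fast memory while reading its $A$- and $B$-subblocks and writing its $C$-subblock with $O(M)$ communication. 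This yields $W(s)=q\,W(s-1)+O((mp)^s)$ with $W$ at the cutoff equal to $O(M)$, which solves to $W=O(q^{t}/M^{\log_{mp}q-1})$ — using $(mp)^s=\Theta(M)$ at the cutoff, so $q^s=\Theta(M^{\log_{mp}q})$ — matching the lower bound. The condition $n\le m,p$ is exactly what makes the correct cutoff the level where $(mp)^s\approx M$ rather than a shallower one forced by $A$ or $B$, and is thus where the two bounds meet.
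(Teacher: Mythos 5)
Your proposal follows essentially the same route as the paper: a small-set edge-expansion bound for the decoding graph $Dec_tC$, combined with the partition-into-segments argument, for the lower bound, and the natural recursive schedule with cutoff at $(mp)^s=\Theta(M)$ for the matching upper bound (the paper's Section~\ref{sec:model} recurrence with $N^*=mp$ when $n\le m,p$). The one piece you explicitly leave open --- making the ``boundary $b$ implies at most $O(b^{\log_{mp}q})$ multiplications'' lemma rigorous with constants uniform in $t$ --- is exactly the content of the paper's Lemma~\ref{lem:Dec-expansion} together with the decomposition into edge-disjoint small subgraphs (Lemma~2.1 of \cite{BallardDemmelHoltzSchwartz11b}); the paper proves it not by your ``fix a level $\ell$ with $(mp)^\ell\approx b$ and count near-full copies'' scheme, but by showing homogeneity of the $S$-density across the $t+1$ layers of the layered graph (Proposition~\ref{clm:S-homogeneity}) and then playing that against the forced heterogeneity at the leaves via a tree decomposition of the recursion; connectivity of $Dec_1C$ enters only through the fact that each embedded base-case component that $S$ splits contributes at least one boundary edge, which is where the uniform constant comes from. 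Two small corrections you should make: your key lemma is false as literally stated (take $S$ to be all of $Dec_t$, which has empty boundary inside $Dec_t$ but contains all $q^t$ multiplications), so it must be restricted to sets of bounded size --- this is why the paper works with $h_s$ and segments of prescribed size $s=cM^{\log_{mp}q}$ rather than with an unconstrained boundary budget; and the step converting $O(M)$ moved words into an $O(M)$ edge boundary silently uses that $Dec_tC$ has constant maximum degree, which is Proposition~\ref{fct:constant-degree} and requires $n>1$.
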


Note that in the case $m=n=p$, this result reproduces the lower bound for Strassen-like square matrix
multiplication algorithms in \cite{BallardDemmelHoltzSchwartz11b}.  In this case, for $\omega_0=\log_n q$, we obtain \(W=\Omega\left(\frac{(n^t)^{\omega_0}}{M^{\omega_0/2-1}}\right)\).

\begin{theorem} \label{thm:enc-con}
Let \(\mnpq{m^t}{n^t}{p^t}{q^t}\) be the algorithm obtained from a base case \(\mnpq mnpq\).  
If an encoding graph of the base case is connected and has no multiply-copied inputs\footnote{See Section~\ref{sec:preliminaries} for a formal definition.}, then
\[W=\Omega\left(\frac{q^t}{t^{\log_N q}M^{\log_{N}q-1}}\right),\]
where $N=mn$ or $N=np$ is the size of the input to the encoding graph.  Further, this bound is tight if \(N=\max\{mn,np,mp\}\), up to a factor of \(t^{\log_N q}\), which is a polylogarithmic factor in the input size.
\end{theorem}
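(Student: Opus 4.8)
\medskip
\noindent\textbf{Proof strategy.}\quad The plan is to follow the two-part pattern that \cite{BallardDemmelHoltzSchwartz11b} used for Strassen-like \emph{square} multiplication: prove the lower bound by combining the Hong--Kung partition of an execution into memory-bounded \emph{segments}~\cite{HongKung81} with an edge-expansion estimate for the encoding CDAG, and prove tightness by exhibiting a recursive blocking algorithm whose communication cost matches up to the stated factor. The feature that distinguishes this from Theorem~\ref{thm:dec-con} and from the square case is that the encoding graph is the \emph{expanding} side of the CDAG: it has $N^t$ inputs but $q^t$ outputs with $N<q$, and its self-similar structure is $q$-ary recursive branching layered on top of an entrywise replication of the base-case encoding gadget at each level. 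Controlling how the expansion of the base gadget survives $t$ rounds of this layered composition is exactly where the extra $t^{\log_N q}$ factor --- and the gap between our upper and lower bounds --- comes from. Throughout we may assume, using the symmetry $\langle m,n,p\rangle=\langle n,p,m\rangle=\cdots$, that the hypothesized encoding graph is that of $A$, of input size $N=mn$; write $Enc_A$ for its level-$t$ CDAG and $g$ for the base gadget.

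For the lower bound, the first step is to show that the hypotheses on $g$ --- connected and with no multiply-copied inputs --- force the relevant isoperimetric quantity of $g$ to be strictly positive; ``no multiply-copied inputs'' is exactly the structural condition, made precise in Section~\ref{sec:preliminaries}, that rules out the degenerate cases in which $g$ fails to expand. The second step is to bootstrap this to $Enc_A$: propagating the base-case estimate across the $t$ recursive layers, with only an $O(t)$ loss, yields an expansion bound of the shape $|\partial S|=\Omega\!\bigl(|S|^{1/\log_N q}/t\bigr)$ for every vertex subset $S$ up to about half the size of $Enc_A$. The third step is the Hong--Kung argument: partition a valid execution into segments transferring $M$ words each; in one segment the multiplication vertices $S$ performed, together with the part of $Enc_A$ exercised within the segment to produce their $A$-operands, form a subgraph that interfaces with the rest of the computation only through the $O(M)$ values resident in, read into, or written out of fast memory, so its edge boundary is $O(M)$ and the expansion bound gives $|S|=O\!\bigl((Mt)^{\log_N q}\bigr)$. (Distinct multiplications have distinct $A$-operands, so $S$ genuinely pulls on $|S|$ distinct outputs of $Enc_A$.) Since there are $q^t$ multiplications in all, there are $\Omega\!\bigl(q^t/(Mt)^{\log_N q}\bigr)$ segments, and therefore
\[
W \;\ge\; M\cdot(\#\text{segments}-1)\;=\;\Omega\!\left(\frac{q^t}{t^{\log_N q}\,M^{\log_N q-1}}\right).
\]

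For tightness when $N=\mu:=\max\{mn,np,mp\}$, I would analyze the plain recursive algorithm that applies the base scheme and recurses on the $q$ subproblems $\langle m^{t-1},n^{t-1},p^{t-1}\rangle$, switching to an in-memory computation once the recursion reaches a depth $k$ at which all three submatrices fit together in fast memory, i.e.\ $(mn)^k+(np)^k+(mp)^k=\Theta(M)$, equivalently $\mu^k=\Theta(M)$. Its communication cost obeys $W(t)\le q\,W(t-1)+\Theta(\mu^t)$ with $W(k)=O(M)$; since $q>\mu$ (any bilinear scheme needs at least $\max\{mn,np,mp\}$ products, by the above symmetry), the geometric sum is dominated by its leading term, and using $q^k=\Theta\!\bigl((\mu^k)^{\log_\mu q}\bigr)=\Theta\!\bigl(M^{\log_\mu q}\bigr)$ the recurrence solves to $W(t)=O\!\bigl(q^t/M^{\log_\mu q-1}\bigr)$. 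When the connected encoding graph is the one over the largest dimension, so $N=\mu$, this upper bound matches the lower bound up to the factor $t^{\log_N q}$, which is polylogarithmic in the input size; this is the asserted tightness.

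The step I expect to be the main obstacle is the second one: propagating the positive base-case isoperimetric estimate through the $t$-level layered recursion defining $Enc_A$ with only a polynomial --- ideally linear --- loss in $t$, and with the exponent coming out exactly $\log_N q$. This calls for an isoperimetric argument in the spirit of \cite{BallardDemmelHoltzSchwartz11b} that simultaneously tracks the $q$-ary branching and the per-level replication of $g$, and it is here that ``no multiply-copied inputs'' is indispensable. A secondary technical point is making the interface argument of step three rigorous --- confirming that the portion of $Enc_A$ touched inside a single segment really does have edge boundary $O(M)$ in the full CDAG --- so that the combinatorial estimate may legitimately be charged against that segment's fast/slow-memory traffic.
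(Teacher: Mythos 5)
Your overall strategy is the paper's: an expansion bound for the encoding subgraph combined with the segment/partition argument for the lower bound, and the na\"ive recursive algorithm's recurrence \(W(t)\le q\,W(t-1)+\Theta(\mu^t)\) for the matching upper bound (Section~\ref{sec:preliminaries}). The one place where the mechanics differ is exactly the step you flag as the main obstacle. The paper does not prove an expansion bound of the form \(|\partial S|=\Omega(|S|^{1/\log_N q}/t)\) directly on \(Enc_tA\); instead it passes to the \emph{relaxed} CDAG, in which a vertex reused across recursive levels is split into a chain of copy vertices, one per level. The relaxed encoding graph has constant degree, so Lemma~\ref{lem:Dec-expansion}'s layered homogeneity/heterogeneity argument applies verbatim with \((mn,q)\) in place of \((mp,q)\), giving the clean bound \(h_s(Enc'_tA)=\Omega((mn/q)^{\log_q s})\) with no \(t\)-loss (Lemma~\ref{lem:stretch}). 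The factor of \(t\) then enters only once, in the partition argument: collapsing the relaxed graph back to the true CDAG can shrink \(|R_S|+|W_S|\) by a factor \(O(t)\) (up to \(t\) copy vertices collapse to one operand), which is compensated by enlarging the segment size to \(s=\Theta((tM)^{\log_N q})\). Your formulation risks paying the \(t\) twice --- once in the discounted expansion bound and once again when relating the edge boundary of a degree-\(O(t)\) graph to the \(O(M)\) boundary operands --- and the relaxed-graph bookkeeping is precisely how the paper avoids that.

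One conceptual correction: the hypothesis of ``no multiply-copied inputs'' is not what makes the base gadget expand --- any fixed finite \emph{connected} graph has strictly positive edge expansion, so your step one needs only connectivity. Its role is to control degrees across the recursion: if each input vertex of \(Enc_1A\) appears at most once as an output, then a vertex of \(Enc_tA\) is copied at most once per level and the out-degree is \(O(t)\); with multiply-copied inputs the degree could grow exponentially in \(t\), and neither the \(O(t)\) contraction step nor the resulting \(t^{\log_N q}\) loss in the theorem would survive. You do gesture at this correctly in your closing remarks, but your step one misattributes the hypothesis. With that fixed, and the relaxation device substituted for your direct discounted-expansion claim, your outline is the paper's proof.
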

We also treat the cases of disconnected encoding and decoding graphs and obtain similar bounds with restrictions on the fast memory size \(M\).  See Corollaries~\ref{thm:dec-discon} and~\ref{thm:enc-discon} in Section~\ref{sec:exprect}.

These theorems and corollaries apply in particular to the algorithms of Bini et al. \cite{BiniCapovaniRomaniLotti79} and Hopcroft and Kerr \cite{HopcroftKerr71}, which we detail in Section~\ref{sec:examples}.

\subsection{Paper organization} 
In Section~\ref{sec:preliminaries} we state some preliminary facts about the computational graph and edge expansion.  Section~\ref{sec:comp-exp} explains the connection between communication cost and edge expansion.  The proofs of the lower bound theorems stated in Section~\ref{sec:intro:lb}, as well as some extensions, appear in Section~\ref{sec:exprect}. In Section~\ref{sec:examples} we apply our new lower bounds to two example algorithms: Bini's algorithm and the Hopcroft-Kerr algorithm. Appendix~\ref{app:details} gives further details of Bini's algorithm and the Hopcroft-Kerr algorithm.

\section{Preliminaries}
\label{sec:preliminaries}

\subsection{The Computational Graph}
For a given algorithm, we consider the CDAG $G=(V,E)$, where there is a vertex for each arithmetic
operation {\em (AO)} performed, and for every input element. $G$
contains a directed edge $(u,v)$, if the output operand of the AO
corresponding to $u$ (or the input element corresponding to $u$),
is an input operand to the AO corresponding to $v$. The in-degree
of any vertex of $G$ is, therefore, at most 2 (as the arithmetic operations are binary).
The out-degree is, in general, unbounded, i.e., it may be a function of $|V|$.

\subsubsection{The relaxed computational graph.}
For a given recursive algorithm, the \emph{relaxed} computational graph is almost identical to the computational DAG with the following change: when a vertex corresponds to re-using data across recursive levels, we replace it with several connected ``copy vertices,'' each of which exists in one recursive level.  While the CDAG of a recursive algorithm may have vertices of degree that depend on $|V|$,    this relaxed CDAG has constant bounded degree. We use the relaxed graph to handle such cases in Section~\ref{sec:stretch}.

\subsubsection{Multiply-copied vertices.}
We say that a base-case encoding subgraph has {\em no multiply-copied vertices} if each input vertex appears at most once as an output vertex.  An output vertex \(v\) is {\em copied} from an input vertex if the in-degree of \(v\) is exactly one.  See, for example, Figure~\ref{fig:binigraph}.  The vertex \(a_{11}\) is copied to the third output of \(Enc_1A\) but is not copied to any other outputs.  Since all other inputs are also copied at most once, there are no multiply-copied vertices in Figure~\ref{fig:binigraph}.

This condition is necessary for the degree of the entire algorithm's encoding subgraph to be at most logarithmic in the size of the input.  We are not aware of any fast matrix multiplication algorithm that has multiply-copied vertices, although the recursive formulation of classical matrix multiplication does.

\subsection{Edge expansion}
The edge expansion $h(G)$ of a $d$-regular undirected
graph $G=(V,E)$ is:
\begin{equation*}
h(G) \equiv \min_{U \subseteq V, |U| \leq |V|/2} \frac{|E(U, V
\setminus U )|}{d \cdot |U|}
\end{equation*}
where $E(A,B)\equiv E_G(A,B)$ is the set of edges connecting the vertex
sets $A$ and $B$. We omit the subscript $G$ when
the context makes it clear.  Treating a CDAG as undirected simplifies the analysis and does not affect the asymptotic communication cost.  For many graphs, small sets expand more than larger sets. Let $h_s(G)$ denote the edge expansion for
sets of size at most $s$ in $G$:
\begin{equation*}
h_s(G) \equiv \min_{U \subseteq V, |U| \leq s} \frac{|E(U, V
\setminus U )|}{d \cdot |U|} ~.
\end{equation*}

Note that CDAGs are typically not regular.
If a graph $G=(V,E)$ is not regular but has a bounded maximal degree $d$,
then we can add ($<d$) loops to vertices of degree $<d$, obtaining
a regular graph $G'$. We use the convention that a loop adds 1 to the degree of a vertex.
Note that for any $S \subseteq V$,
we have $|E_{G}(S,V \setminus S)| = |E_{G'}(S,V \setminus S)|$, as
none of the added loops contributes to the edge expansion
of~$G'$.  

\subsection{Matching sequential algorithm}
In many cases, the communication cost lower bounds are matched by the na\"{i}ve recursive algorithm.  
The cost of the recursive algorithm applied to \(\langle m^t,n^t,p^t\rangle=q^t\), taking $N^*=\max\{mn,np,mp\}$ is
$$W(t)=\left\{\begin{array}{ll} q \cdot W(t-1) + \Theta\lt((N^*)^{t-1}\rt) & \textrm{if }(N^*)^t>M/3\\3(N^*)^t & \textrm{otherwise}\end{array}\right.,$$
since the algorithm does not communicate once the three matrices fit into fast memory.  The solution to this recurrence is given by
$$W = \Theta\lt(\frac{q^t}{M^{\log_{N^*}q-1}}\rt).$$

\section{Communication Cost and Edge Expansion}\label{sec:comp-exp}
In this section we recall the partition argument and how to combine it with edge expansion analysis to obtain communication cost lower bounds.  This follows our approach in \cite{BallardDemmelHoltzSchwartz11b,BallardDemmelHoltzLipshitzSchwartz12b}.  A similar partition argument previously appeared in \cite{HongKung81,IronyToledoTiskin04,BallardDemmelHoltzSchwartz11a}, where other techniques (geometric or combinatorial) are used to connect the number of flops to the amount of data in a segment.

\subsection{The partition argument}\label{sec:partition}
Let $M$ be the size of the fast memory. Let $O$ be any total
ordering of the vertices that respects the partial ordering of the
CDAG $G$. This total
ordering can be thought of as the actual order in which the
computations are performed. Let ${\cal P}$ be any partition of $V$ into
segments $S_1,S_2,...$, so that a segment $S_i \in {\cal P}$ is a subset of
the vertices that are contiguous in the total ordering $O$.

Let $R_S$ and $W_S$ be the set of read and write operands,
respectively. Namely, $R_S$ is the set
of vertices outside $S$ that have an edge going into $S$, and
$W_S$ is the set of vertices in $S$ that have an edge going
outside of $S$. Then the total communication costs due to reads of AOs in $S$ is
at least $|R_S|-M$, as at most $M$ of the needed $|R_S|$ operands
are already in fast memory when the execution of the segment's AOs
starts. Similarly, $S$ causes at least $|W_S|-M$ actual write
operations, as at most $M$ of the operands needed by other
segments are left in the fast memory when the execution of the
segment's AOs ends. The total communication cost is therefore bounded below
by

\begin{eqnarray}\label{eqn:RW} W &\geq& \min_{{\cal P}} \sum_{S \in {\cal P}}
\lt( |R_S| + |W_S| - 2M \rt)~.
\end{eqnarray}

\subsection{Edge expansion and communication cost}

Consider a segment $S$ and its read and write operands $R_S$ and
$W_S$. 
\begin{proposition}\label{clm:RW}
If the graph $G$ containing $S$ has $h_s(G)$ edge expansion\footnote{For many algorithms, the edge expansion $h(G)$ deteriorates with $|G|$, whereas $h_s(G)$ is constant with respect to $|G|$, which allows for better communication lower bounds.} for sets of size \(s=|S|\), maximum
(constant) degree $d$, and at least $2|S|$ vertices, then
$|R_S|+|W_S| \geq \frac12 \cdot h_s(G) \cdot |S|$~.
\end{proposition}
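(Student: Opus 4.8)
The plan is to relate the quantity $|R_S| + |W_S|$ to the edge boundary of $S$ in $G$, and then invoke the small-set edge expansion hypothesis to lower-bound that boundary. First I would observe that every edge of $G$ crossing between $S$ and $V \setminus S$ is ``witnessed'' by a vertex of $R_S \cup W_S$: if $(u,v) \in E$ with $u \notin S$, $v \in S$, then $u \in R_S$; if $u \in S$, $v \notin S$, then $u \in W_S$. Since $G$ has maximum degree $d$, each vertex of $R_S \cup W_S$ is incident to at most $d$ such crossing edges, so
\[
|E(S, V \setminus S)| \;\le\; d \cdot \bigl(|R_S| + |W_S|\bigr).
\]
(One should be slightly careful that a single vertex could lie in both $R_S$ and $W_S$, but that only makes the right-hand side larger, so the inequality is safe; alternatively bound by $d\cdot|R_S \cup W_S| \le d\cdot(|R_S|+|W_S|)$.)

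Next I would bring in expansion. We are told $|S| \le s$ and that $h_s(G)$ is the edge expansion restricted to sets of size at most $s$. There is one wrinkle: the definition of $h_s$ quantifies over all $U$ with $|U| \le s$, but the ratio $|E(U, V\setminus U)|/(d|U|)$ is only guaranteed to be a valid ``expansion'' lower bound when $|U| \le |V|/2$ — this is why the hypothesis includes ``at least $2|S|$ vertices,'' i.e. $|V| \ge 2|S|$. With that hypothesis in hand, $S$ is an admissible set in the minimization defining $h_s(G)$ (we may also need to pass to the $d$-regularized graph $G'$ by adding loops, as in Section~\ref{sec:preliminaries}, noting loops do not change $|E(S,V\setminus S)|$), and therefore
\[
|E(S, V \setminus S)| \;\ge\; h_s(G) \cdot d \cdot |S|.
\]

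Combining the two displayed inequalities gives $d\cdot(|R_S| + |W_S|) \ge h_s(G)\cdot d\cdot|S|$, hence $|R_S| + |W_S| \ge h_s(G)\cdot|S|$, which is in fact twice the claimed bound; the factor $\tfrac12$ in the statement is slack that presumably absorbs the possibility of double-counting vertices in $R_S \cap W_S$ or edges incident to such vertices, and I would keep it to be safe. The main (and really the only) subtlety to get right is the admissibility condition for applying $h_s$: one must use $|V| \ge 2|S|$ to ensure $|S| \le |V|/2$ so that $S$ actually appears in the minimum defining the edge expansion, and one must handle the non-regularity of the CDAG by the loop-padding trick so that the normalization by $d$ is legitimate. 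Everything else is a direct degree count.
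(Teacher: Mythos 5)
Your proposal is correct and follows essentially the same argument as the paper: lower-bound the edge boundary $|E(S,V\setminus S)|$ by $h_s(G)\cdot d\cdot |S|$ via the expansion hypothesis, and upper-bound it by a degree count over the vertices of $R_S\cup W_S$ that witness the crossing edges. The only (cosmetic) difference is that the paper argues that half the crossing edges touch $R_S$ or half touch $W_S$ and bounds $\max\{|R_S|,|W_S|\}$, incurring the factor $\tfrac12$, whereas you count the disjoint union $R_S\cup W_S$ directly and obtain the slightly stronger bound $|R_S|+|W_S|\geq h_s(G)\cdot|S|$, of which the stated inequality is an immediate consequence.
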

\begin{proof}
We have $|E(S, V \setminus S)| \geq h_s(G) \cdot d \cdot |S|$.
Either (at least) half of the edges $E(S, V \setminus S)$ touch
$R_S$ or half of them touch $W_S$. As every vertex is of degree
$d$, we have $|R_S|+|W_S| \geq \max\{|R_S|,|W_S|\} \geq \frac 1d
\cdot \frac12 \cdot |E(S, V \setminus S)| \geq h_s(G) \cdot |S| /2$.
\end{proof}

\noindent
Combining this with (\ref{eqn:RW}) and choosing to partition
$V$ into $|V|/s$ segments of equal size $s$, we obtain:
$
W \geq  \max_{s}\frac{|V|}{s} \cdot \lt( \frac{h_s(G) \cdot
s}{2} - 2M \rt)
$.
Choosing the minimal $s$ so that
\begin{eqnarray}\label{eqn:IO-general-expansion-condition}
\frac{h_s(G) \cdot s}{2} \geq 3M
\end{eqnarray} we obtain
\begin{eqnarray}\label{eqn:IO-general-expansion}
W &\geq&  \frac{|V|}{s} \cdot M~.
\end{eqnarray}

In some cases, as in fast square and rectangular matrix multiplication, the computational graph $G$ does not fit this
analysis: it may not be regular, it may have vertices of unbounded
degree, or its edge expansion may be hard to analyze. In such
cases, we may then consider some subgraph $G'$ of $G$ instead to
obtain a lower bound on the communication cost.  The natural subgraph to select in fast (square and rectangular) matrix multiplication algorithms is the decoding graph or one of the two encoding graphs.

\section{Expansion Properties of Fast Rectangular Matrix Multiplication Algorithms}\label{sec:exprect}

There are several technical challenges that we deal with in the rectangular case, on top of the analysis in \cite{BallardDemmelHoltzSchwartz11b} (where we deal with the difference between addition and multiplication vertices in the recursive construction of the CDAG).  These additional challenges arise from the differences between the CDAG of rectangular algorithms, such as Bini's algorithm and the Hopcroft-Kerr algorithm on the one hand, and of Strassen's algorithm on the other hand.  The three subgraphs, two encoding and one decoding, are of the same size in Strassen's and of unequal size in rectangular algorithms.  The largest expansion guarantee is given by the subgraph corresponding to the largest of the three matrices.  One consequence is that it is necessary to consider the case of unbounded degree vertices that may appear in the encoding subgraphs.  Additionally, in some cases the encoding or decoding graphs consist of several disconnected components.

\subsection{The computational graph for $\mnpq{m^t}{n^t}{p^t}{q^t}$}
Consider the computational graph $H_t$ associated with multiplying a matrix \(A\) of dimension \(m^t\times n^t\) by a matrix \(B\) of dimension \(n^t\times p^t\).
Denote by
$Enc_{t}A$ the part of $H_{t}$ that corresponds to the encoding
of matrix $A$. Similarly, $Enc_{t}B$, and $Dec_{t}C$
correspond to the parts of $H_{t}$ that compute the encoding
of $B$ and the decoding of $C$, respectively  (see Figure~\ref{fig:comp-graph}).

\begin{figure}[t]
\begin{center}
\includegraphics[width=3in]{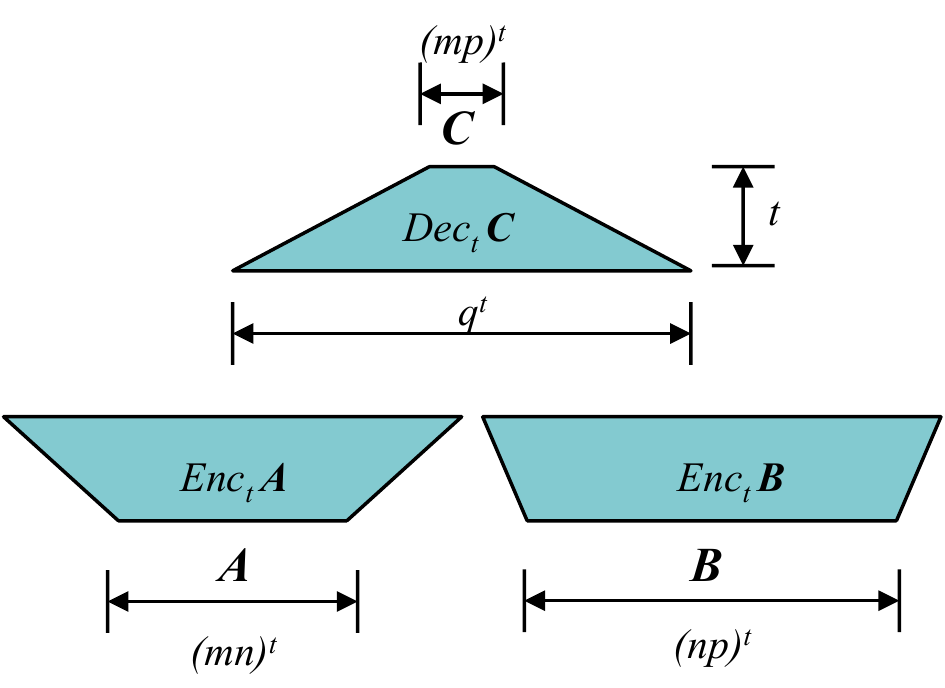}
\caption{Computational graph for $\langle m^t,n^t,p^t \rangle = q^t$ rectangular matrix multiplication generated from $t$ recursive levels with base graph given by $\langle m,n,p \rangle = q$.  In this figure $m<p<n$.}
\label{fig:comp-graph}
\end{center}
\end{figure}

\subsubsection{A top-down construction of the computational graph.}\label{sec:recconst}
We next construct the computational graph $H_{i+1}$ by constructing $Dec_{i+1}C$ from $Dec_i C$ and $Dec_1 C$
and similarly constructing $Enc_{i+1}A$ and $Enc_{i+1}B$, then composing the three parts together.
\begin{enumerate}
\item
Duplicate $Dec_{1}C$ $q^i$ times.
\item
Duplicate $Dec_{i}C$ $mp$ times.
\item
Identify the $mp \cdot q^i$ output vertices of the copies of
$Dec_{1}C$ with the $mp \cdot q^i$ input vertices of the copies of $Dec_{i}C$:
\begin{itemize}
\item
Recall that each $Dec_{1}C$ has \(mp\) output vertices.
\item
The first output vertex of the $q^i$
$Dec_{1}C$ graphs are identified with the $q^i$ input vertices of the first copy of $Dec_{i}C$.
\item
The second output vertex of the $q^i$
$Dec_{1}C$ graphs are identified with the $q^i$ input vertices of the second copy of $Dec_{i}C$.
And so on.
\item
We make sure that the $j$th input vertex of a copy of $Dec_{i}C$ is identified
with an output vertex of the $j$th copy of $Dec_{1}C$.
\end{itemize}
\item
We similarly obtain $Enc_{i+1}A$ from $Enc_{i}A$ and $Enc_{1}A$,
\item
and $Enc_{i+1}B$ from $Enc_{i}B$ and $Enc_{1}B$.
\item
For every $i$, $H_{i}$ is obtained by connecting edges from the $j$th output vertices of $Enc_{i}A$ and $Enc_{i}B$
to the $j$th input vertex of $Dec_{i}C$.
\end{enumerate}
This completes the construction. Let us note some properties of this graphs.

As all out-degrees are at most \(mp\) and all in degree are at most 2 we have:
\begin{proposition}\label{fct:constant-degree}
All vertices of $Dec_{t}C$ are of degree at most $mp+2$, as long as \(n>1\) (that is, as long as the base case is not an outer product).
\end{proposition}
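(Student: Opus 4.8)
The proof reduces to two structural facts about \(Dec_tC\), viewed as a subgraph of the full CDAG \(H_t\): every vertex has in-degree at most \(2\), and every vertex has out-degree at most \(mp\). The in-degree bound is immediate: a vertex of \(Dec_tC\) that is not an input vertex is an addition, hence (being a binary arithmetic operation) has in-degree \(2\), or \(1\) if it is a copy vertex; and the input vertices of \(Dec_tC\) are the scalar-multiplication vertices, which by the last step of the construction in Section~\ref{sec:recconst} receive exactly two in-edges, one from \(Enc_tA\) and one from \(Enc_tB\). (Those two edges are precisely why the stated bound is \(mp+2\) and not \(mp\).) So all the content is in the out-degree bound.

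I would prove the out-degree bound by induction on \(t\) along the top-down construction of Section~\ref{sec:recconst}, with the hypothesis that every vertex of \(Dec_iC\) has out-degree at most \(mp\) within \(Dec_iC\). For the base case \(i=1\): \(Dec_1C\) computes the \(mp\) entries of \(C\) as linear combinations of the \(q\) scalar products, and in its standard realization each scalar product feeds into at most one addition per entry in which it occurs (hence at most \(mp\) out-edges), each partial-sum vertex feeds into a single successor, and the \(mp\) output vertices feed nowhere inside \(Dec_1C\). The hypothesis \(n>1\) is used exactly here: it rules out the degenerate outer-product base case \(\langle m,1,p\rangle\), where \(Dec_1C\) is trivial (the scalar products are already the entries of \(C\)) and the above description is vacuous.

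For the inductive step, \(Dec_{i+1}C\) is built from \(q^i\) copies of \(Dec_1C\) and \(mp\) copies of \(Dec_iC\) by identifying the \(mp\cdot q^i\) output vertices of the former with the \(mp\cdot q^i\) input vertices of the latter. Every vertex of \(Dec_{i+1}C\) is then of one of three kinds: a non-output vertex of a \(Dec_1C\) copy (out-degree at most \(mp\) by the base case); a non-input vertex of a \(Dec_iC\) copy (out-degree at most \(mp\) by the inductive hypothesis); or a vertex obtained by identifying an output of a \(Dec_1C\) copy with an input of a \(Dec_iC\) copy, which contributes out-degree \(0\) on the \(Dec_1C\) side and at most \(mp\) on the \(Dec_iC\) side, for a total of at most \(mp\). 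This closes the induction, and together with the in-degree bound yields degree at most \(mp+2\) for every vertex of \(Dec_tC\). I do not anticipate any real difficulty; the only step needing attention is the last case, where one must note that the construction only ever merges an output vertex (which has no out-edges) with an input vertex (which has no in-edges), so out-edges and in-edges are inherited from opposite sides and cannot accumulate beyond \(mp\) and \(2\) respectively across the recursion.
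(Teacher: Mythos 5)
Your overall architecture is sound, and the inductive bookkeeping (in-degree $\leq 2$ always; out-degree $\leq mp$ propagated through the recursive identification because outputs are merged only with inputs) is a careful write-up of the part the paper dismisses with ``the proposition follows.'' But there is a gap exactly at the one point where the paper's proof does real work: your base case rests on the claim that the $mp$ output vertices of $Dec_1C$ have no out-edges, i.e., that no output vertex of the base graph is simultaneously an input vertex (a scalar-product vertex). You justify this by saying that $n>1$ ``rules out the degenerate outer-product base case $\langle m,1,p\rangle$.'' That only addresses the classical outer product. The proposition is stated for an \emph{arbitrary} base-case bilinear algorithm $\langle m,n,p\rangle=q$, and a priori some particular entry $c_{ij}$ could be computed with zero additions because it happens to equal one of the $q$ products $\bigl(\sum u_{kl}a_{kl}\bigr)\bigl(\sum v_{kl}b_{kl}\bigr)$; that single vertex would then be both an input and an output of $Dec_1C$, and your identification argument would let its out-degree accumulate across recursion levels.

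The paper closes this with an algebraic observation you are missing: an output vertex represents the bilinear form $\sum_{k=1}^{n}a_{ik}b_{kj}$, which is irreducible for $n>1$, whereas an input vertex represents a product of two linear forms, which is reducible. Hence for $n>1$ the input and output sets of $Dec_1C$ are disjoint for \emph{every} base-case algorithm, not just the non-outer-product classical one. With that lemma inserted as the justification of your base case, your induction goes through and is, if anything, more explicit than the paper's treatment of the recursive step.
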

\begin{proof}
If the set of input vertices of $Dec_1C$ and the set of its
output vertices are disjoint, then the proposition follows..
Assume (towards contradiction) that the base graph $Dec_1C$ has
an input vertex which is also an output vertex. An output vertex
represents the inner product of two $n$-vectors, i.e., the corresponding row-vector of $A$ and column vector of $B$. The corresponding bilinear polynomial is irreducible.
This is a
contradiction, since \(n>1\)
an input
vertex represents the multiplication of a (weighted) sum of elements of
$A$ with a (weighted) sum of elements of $B$.
\end{proof}

Note, however, that $Enc_1A$ and $Enc_1B$ may have vertices which are both inputs and outputs, therefore $Enc_{t}A$ and $Enc_{t}B$ may have vertices of out-degree which is a function of $t$.
In \cite{BallardDemmelHoltzSchwartz11b,BallardDemmelHoltzLipshitzSchwartz12b}, it was enough to analyze \(Dec_tC\) and  lose only a constant factor in the lower bound.  However in several rectangular matrix multiplication algorithms, it is necessary to consider the encoding graphs as well, since they may provide a better expansion than the decoding graph.

\begin{lemma}\label{lem:Dec-expansion}
If \(Dec_1C\) is connected, then the edge expansion of $Dec_{t}C$ is $$h(Dec_{t}C)=\Omega\lt(\lt(\frac{mp}{q}\rt)^{t}\rt).$$
\end{lemma}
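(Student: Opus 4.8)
The plan is to prove the expansion bound for $Dec_tC$ by induction on $t$, using the recursive (top-down) construction from Section~\ref{sec:recconst} together with the connectivity of the base graph $Dec_1C$. The key structural observation is that $Dec_tC$ is built by taking $mp$ copies of $Dec_{t-1}C$, and replacing each of the $(mp)^{t-1}$ input vertices\footnote{More precisely, the $q^{t-1}\cdot mp$ identified vertices; I will track the exact counts in the write-up.} of those copies by an output vertex of one of $q^{t-1}$ copies of $Dec_1C$. Equivalently, in the opposite direction: $Dec_tC$ consists of $q^{t-1}$ copies of $Dec_1C$ sitting at the ``top'' (the scalar-multiplication side), whose $mp$ outputs each feed into a distinct one of the $mp$ copies of $Dec_{t-1}C$. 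Since $Dec_1C$ is connected and has bounded degree $d=mp+2$ (Proposition~\ref{fct:constant-degree}), it has some constant edge expansion $h(Dec_1C)=\Omega(1)$, and in fact its ``mixing'' property lets us argue that any balanced cut of $Dec_tC$ must cut many edges.

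First I would set up the induction: let $U\subseteq V(Dec_tC)$ with $|U|\le |V(Dec_tC)|/2$, and let $h_t:=h(Dec_tC)$. I would look at how $U$ intersects each of the $mp$ copies of $Dec_{t-1}C$ and the $q^{t-1}$ copies of $Dec_1C$. The heart of the argument is a case analysis. If $U$ is ``spread out'' so that within many of the $mp$ copies of $Dec_{t-1}C$ it occupies a non-negligible but not-too-large fraction, then the inductive hypothesis applied copy-by-copy gives $|E(U,V\setminus U)|\ge \Omega(h_{t-1})\cdot|U|$ after summing, which already beats $(mp/q)^t$ since $h_{t-1}=\Omega((mp/q)^{t-1})$ and we are comparing against $(mp/q)^t=(mp/q)\cdot(mp/q)^{t-1}$ — here I need the $mp/q<1$ direction to be handled carefully, so the recursion is $h_t\ge c\cdot\frac{mp}{q}\cdot h_{t-1}$ for a fixed constant $c$, which telescopes to $h_t=\Omega((mp/q)^t)$. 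The complementary case is when $U$ is concentrated almost entirely inside a single copy of $Dec_{t-1}C$ (or very unevenly distributed); then I would use the connecting structure — the $q^{t-1}$ copies of $Dec_1C$ whose outputs fan out to all $mp$ copies — to show that cutting $U$ off forces cutting $\Omega(|U|/\text{something})$ edges among the $Dec_1C$ copies, again yielding the bound. This mirrors the structure of the analogous lemma for Strassen in \cite{BallardDemmelHoltzSchwartz11b}, which I would follow closely; the rectangular case differs only in that the replication factors $mp$, $q$, and the number of inputs per $Dec_1C$ are no longer tied to a single number.

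The main obstacle I anticipate is the bookkeeping in the ``unevenly distributed'' case: when $U$ lives mostly in one $Dec_{t-1}C$ copy, one must quantify how the $Dec_1C$ copies — each of which has $mp$ outputs going to $mp$ distinct $Dec_{t-1}C$ copies — are forced to contribute boundary edges, and this requires knowing that $U$ cannot simultaneously be large in its host copy and small in its ``share'' of each $Dec_1C$. The clean way to handle this is to use a weighting/charging argument: assign to each of the $mp$ copies of $Dec_{t-1}C$ its fraction $u_j=|U\cap(\text{copy }j)|/|\text{copy }j|$, normalize, and observe that either the $u_j$ are balanced enough for the inductive step, or they are unbalanced enough that the $q^{t-1}$ base-case copies straddle the cut and contribute via $h(Dec_1C)=\Omega(1)$. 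A secondary technical point is that $Dec_tC$ is not regular, but since it has bounded degree $mp+2$ we may add self-loops to regularize as described in Section~\ref{sec:preliminaries} without changing any cut size, so the expansion statement is unaffected; I would invoke that reduction at the start. I do not expect to need $h_s$ (small-set) expansion here — the plain $h(Dec_tC)$ bound suffices for this lemma, and the small-set refinement (which is what actually drives the communication lower bound via \eqref{eqn:IO-general-expansion}) would be derived separately afterward by applying this lemma at an intermediate recursion level.
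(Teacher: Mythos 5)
Your overall architecture (regularize with self-loops, split into a ``spread-out'' case handled by the substructure's expansion and a ``concentrated'' case handled by the connectivity of the $q^{t-1}$ linking copies of $Dec_1C$, and defer $h_s$ to a separate decomposition step) is morally parallel to the paper's homogeneity/heterogeneity dichotomy. But the paper's proof is \emph{not} an induction on $t$, and the reason is exactly the step where your proposal breaks: the telescoping. If your inductive step yields $h_t\ge c\cdot\frac{mp}{q}\cdot h_{t-1}$ with a fixed constant $c<1$ (and any argument that applies the inductive hypothesis copy-by-copy and then discards the copies that are more than half full will have such a $c<1$), then unrolling gives $h_t=\Omega\lt(c^{t}\lt(\frac{mp}{q}\rt)^{t}\rt)$, not $\Omega\lt(\lt(\frac{mp}{q}\rt)^{t}\rt)$. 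This is not a cosmetic loss: the entire content of the lemma is the base of the exponential in $t$, and an extra $c^t$ changes $\log_{mp}q$ to $\log_{c'\!\cdot mp}q$ in the segment-size calculation, hence weakens the exponent of $M$ in Theorem~\ref{thm:dec-con}. The slack you point to (``$(mp/q)^{t-1}$ already beats $(mp/q)^t$'') buys you one factor of $q/(mp)$ per level, but for the algorithms in question (e.g.\ $mp=6$, $q=10$ for Bini; $mp=4$, $q=7$ for Strassen) that slack is smaller than $2$, so it cannot absorb even a loss of $1/2$ per level. To make an induction work you would have to prove the step with the constant equal to $1$, i.e.\ strengthen the hypothesis so that no per-level degradation occurs, and your write-up does not indicate how.

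The paper sidesteps this by a single global argument in which the $G_1$-dependent constant appears exactly once. It views $Dec_tC$ as a layered graph with levels $l_1,\dots,l_{t+1}$ of sizes $(mp)^{t-i+1}q^{i-1}$ and shows (Propositions~\ref{clm:delta-cost} and~\ref{clm:S-homogeneity}) that if the density of $S$ in any level deviates from its average density $\sigma$ by more than $10\%$, the cut already has $\Omega(d|S|(mp/q)^t)$ edges, because the total level-to-level density variation is paid for by cut edges at the smallest level $l_1$. Otherwise the densities at the top and bottom levels are both close to $\sigma$, and the tree argument (Fact~\ref{fct:heterogeneity} and Proposition~\ref{clm:delta-tree-cost}) exploits that the leaves of the recursion tree are singletons with density $0$ or $1$, so the root-to-leaf density variation sums to $\Omega(\sigma|l_1|)$, again forcing $\Omega(d|S|(mp/q)^t)$ cut edges. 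I'd recommend you abandon the recursion on $h_t$ and instead carry out this two-part global accounting (or prove an induction with an explicitly non-degrading constant); as written, the proposal proves a strictly weaker statement than the lemma.
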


\begin{proof}
The proof follows that of Lemma~4.9 in \cite{BallardDemmelHoltzSchwartz11b} adapting the corresponding parameters.  We provide it here for completeness.
Let $G_t=(V,E)$ be $ Dec_tC$, and let $S \subseteq V, |S| \leq
|V|/2$. We next show that $|E(S,V \setminus S)| \geq c \cdot d
\cdot |S| \cdot \lt(\frac{mp}q \rt) ^t $, where $c$ is some universal
constant, and $d$ is the constant degree of $Dec_tC$ (after
adding loops  to make it regular).

The proof works as follows. Recall that $G_t$ is a layered graph (with layers corresponding to recursion steps),
so all edges (excluding loops)
connect between consecutive levels of vertices. We argue (in Proposition \ref{clm:S-homogeneity}) that each level of
$G_t$ contains about the same fraction of $S$ vertices, or else we have
many edges leaving $S$. We also observe (in Fact \ref{fct:heterogeneity}) that such
homogeneity (of a fraction of $S$ vertices) does not hold between
distinct parts of the lowest level, or, again, we have many
edges leaving $S$. We then show that the homogeneity between levels,
combined with the heterogeneity of the lowest level,
guarantees that there are many edges leaving $S$.

Let $l_i$ be the $i$th level of vertices of $G_t$, so $(mp)^t = |l_1|
< |l_2| < \cdots < |l_i| = (mp)^{t-i+1}q^{i-1} < \cdots < |l_{t+1}| = q^t$.
Let $S_i \equiv S \cap l_i$. Let $\sigma = \frac{|S|}{|V|}$ be the fractional size
of $S$ and $\sigma_i = \frac{|S_i| }{|l_i|}$ be the fractional size of $S$ at level $i$.  Let $\delta_i=\sigma_i-\sigma_{i+1}$.
Due to averaging, we observe the following:
\begin{fact} There
exist $i$ and $i'$ such that $\sigma_i \leq \sigma \leq \sigma_{i'}$.
\end{fact}

\begin{fact}\label{fct:l1}
\begin{eqnarray*}
|V| &= &\sum _{i =1}^{t+1} |l_{i}|
=
 \sum_{i =1}^{t+1} |l_{t+1}| \cdot \lt(\frac{mp}{q}\rt)^i\\
&=&  |l_{t+1}| \cdot \lt(1 - \lt(\frac{mp}{q}\rt)^{t+2}\rt) \cdot \frac q{q-mp}\\
&=&  \left(\frac{mp}q \right)^t\cdot |l_{1}| \cdot \lt(1 - \lt(\frac{mp}{q}\rt)^{t+2}\rt) \cdot \frac q{q-mp}.
\end{eqnarray*}
so
$ \frac{q-mp}q \leq \frac{|l_{t+1}|}{|V|} \leq \frac{q-mp}q \cdot \frac{1}{1- \lt(\frac{mp}{q}\rt)^{t+2}  }
$, and
$
\frac{q-mp}q \cdot \lt(\frac{mp}{q}\rt)^{t}
\leq \frac{|l_1|}{|V|} \leq \frac{q-mp}q \cdot \lt(\frac{mp}{q}\rt)^{t} \cdot \frac{1}{1- \lt(\frac{mp}{q}\rt)^{t+2}  }.
$
\end{fact}

\begin{proposition}\label{clm:delta-cost}
There exists $c' = c'(G_1)$ so that $|E(S, V \setminus S) \cap E(l_i,l_{i+1})| \geq c'\cdot d \cdot|\delta_i| \cdot
|l_i| $.
\end{proposition}
\begin{pfof}{Proposition~\ref{clm:delta-cost}}
Let $G'$ be a $G_1$ component connecting $l_i$ with $l_{i+1}$ (so it has \(mp\) vertices in $l_i$ and \(q\) in $l_{i+1}$).
$G'$ has no edges in $E(S,V \setminus
S)$ if all or none of its vertices are in $S$. Otherwise, as $G'$ is connected, it
contributes at least one edge to $E(S,V \setminus S)$.
The number of such $G_1$ components with all their vertices in $S$
is at most $\min\{\sigma_i,\sigma_{i+1}\}\cdot \frac{|l_i|}{mp}$. Therefore, there are at least
$|\sigma_{i}-\sigma_{i+1}|\cdot \frac{|l_i|}{mp}$  $G_1$ components with at least one vertex in $S$ and one vertex that is not.
\end{pfof}

\begin{proposition}[Homogeneity between levels]\label{clm:S-homogeneity}
If there exists $i$ so that $\frac{|\sigma-\sigma_i|}{\sigma}\geq \frac1{10}$,  then
$$|E(S,V \setminus S)| \geq c \cdot d \cdot |S|\cdot \lt( \frac{mp}q\rt)^t $$
where $c>0$ is some constant depending on $G_1$ only.
\end{proposition}
\begin{pfof}{Proposition~\ref{clm:S-homogeneity}}
Assume that there exists $j$ so that $\frac{|\sigma-\sigma_j|}{\sigma}\geq \frac1{10}$. 
By Proposition \ref{clm:delta-cost}, we have
\begin{eqnarray*}
|E(S,V \setminus S)|
&\geq& \sum _{i \in [t]} |E(S, V \setminus S) \cap E(l_i,l_{i+1})|\\
&\geq& \sum _{i \in [t]} c'\cdot d\cdot|\delta_i| \cdot |l_i|\\
&\geq& c'\cdot d\cdot |l_1| \sum _{i \in [t]}   |\delta_i|\\
&\geq& c'\cdot d\cdot |l_1|
 \cdot \lt(\max_{i \in [t+1]} \sigma_i
- \min_{i \in [t+1]} \sigma_i\rt) .
\end{eqnarray*}
By the initial assumption, there exists $j$ so that $\frac{|\sigma-\sigma_j|}{\sigma}\geq \frac1{10}$, therefore $\max_i \sigma_i - \min_i \sigma_i \geq \frac{\sigma}{10}$, then
\begin{align*}
|E(S,V \setminus S)|
&\geq  c'\cdot d\cdot |l_1| \cdot \frac{\sigma}{10}\\
\intertext{By Fact \ref{fct:l1}, $|l_1| \geq  \frac{q-mp}q \cdot \lt ( \frac{mp}{q} \rt)^t \cdot |V| $, }
&\geq c'\cdot d \cdot \frac{q-mp}q \cdot \lt( \frac{mp}{q} \rt)^t \cdot |V| \cdot \frac{\sigma}{10}\\
\intertext{As $|S| = \sigma \cdot |V|$,}
&\geq c \cdot d\cdot |S| \cdot \lt( \frac{mp}q\rt)^t
\end{align*}
for any $c \leq \frac{c'}{10}\cdot \frac{q-mp}q$.
\end{pfof}

Let $T_t$ be a tree corresponding to the recursive construction of $G_t$ in the following way: 
$T_t$ is a tree of height $t+1$, where each internal node has \(mp\) children.
The root $r$ of $T_t$ corresponds to $l_{t+1}$ (the largest level of $G_t$).
The \(mp\) children of $r$ correspond to the largest levels of the \(mp\) graphs that one can obtain by
removing the level of vertices $l_{t+1}$ from $G_t$. And so on.
For every node $u$ of $T_t$, denote by $V_u$ the set of vertices in $G_t$ corresponding to $u$.
We thus have $|V_r|=q^t$ where $r$ is the root of $T_t$,
$|V_u| = q^{t-1}$ for each node $u$ that is a child of $r$;
and in general we have $(mp)^{i}$ tree nodes $u$ corresponding to a set of size $|V_u| = q^{t-i+1}$.
Each leaf $l$ corresponds to a set of size $1$.

For a tree node $u$, let us define $\rho_u = \frac{|S \cap V_u|}{|V_u|}$ to be the fraction of $S$ nodes in $V_u$,
and $\delta_u = |\rho_u - \rho_{p(u)}|$, where $p(u)$ is the parent of $u$ (for the root $r$ we let $p(r)=r$).
We let $t_i$ be the $i$th level of $T_t$, counting from the bottom, so $t_{t+1}$ is the root and $t_{1}$ are the leaves.

\begin{fact}\label{fct:heterogeneity}
As $V_r=l_{t+1}$ we have $\rho_r = \sigma_{t+1}$. For a tree leaf $u \in t_1$, we have $|V_u|=1$. Therefore $\rho_u \in \{0,1\}$.
The number of vertices $u$ in $t_1$ with $\rho_u=1$ is $\sigma_1 \cdot |l_1|$.
\end{fact}

\begin{proposition}\label{clm:delta-tree-cost}
Let $u_0$ be an internal tree node, and let $u_1,u_2,\ldots,u_{mp}$ be its \(mp\) children. Then
$$\sum_i|E(S, V \setminus S) \cap E(V_{u_i},V_{u_0})| \geq c''\cdot d \cdot \sum_i  |\rho_{u_i}-\rho_{u_0}| \cdot |V_{u_i}|$$
where $c''=c''(G_1)$.
\end{proposition}
\begin{pfof}{Proposition~\ref{clm:delta-tree-cost}}
The proof follows that of Proposition \ref{clm:delta-cost}.
Let $G'$ be a $G_1$ component connecting $V_{u_0}$ with $\bigcup_{i \in [mp]}V_{u_i}$ (so it has \(q\) vertices in $V_{u_0}$
and one in each of $V_{u_1}$,$V_{u_2}$,\ldots,$V_{u_{mp}}$).
$G'$ has no edges in $E(S,V \setminus
S)$ if all or none of its vertices are in $S$. Otherwise, as $G'$ is connected, it
contributes at least one edge to $E(S,V \setminus S)$.
The number of $G_1$ components with all their vertices in $S$
is at most $\min\{\rho_{u_0},\rho_{u_1},\rho_{u_2},\dots,\rho_{u_{mp}}\}\cdot \frac{|V_{u_1}|}{mp}$. Therefore, there are at least
$\max_{i \in [mp]}\{|\rho_{u_0}-\rho_{u_i}|\}\cdot \frac{|V_{u_1}|}{mp} \geq \frac{1}{(mp)^2} \cdot \sum_{i\in [mp]} |\rho_{u_i}-\rho_{u_0}| \cdot |V_{u_i}|$ \; $G_1$  components with at least one vertex in $S$ and one vertex that is not.
\end{pfof}

\begin{align*}
\intertext{We have}
\nonumber
|E(S,V \setminus S)|
&= \sum _{u \in T_t} |E(S, V \setminus S) \cap E(V_u,V_{p(u)})|\\
\nonumber
\intertext{By Proposition \ref{clm:delta-tree-cost}, this is}
\nonumber
&\geq  \sum _{u \in T_t} c''\cdot d\cdot  |\rho_{u}-\rho_{p(u)}|  \cdot |V_u|\\
&= c''\cdot d\cdot \sum_{i \in [t]} \sum _{u \in t_i} |\rho_{u}-\rho_{p(u)}|  \cdot q^{i-1}\\
\nonumber
&\geq c''\cdot d\cdot \sum_{i \in [t]} \sum _{u \in t_i}  |\rho_u-\rho_{p(u)}| \cdot (mp)^{i-1} \\
\nonumber
\intertext{As each internal node has \(mp\) children, this is}
&= c''\cdot d\cdot \sum _{v \in t_1} \sum_{u\in v \sim r}  |\rho_u-\rho_{p(u)}| \\
\nonumber
\intertext{where $v \sim r$ is the path from $v$ to the root $r$. By the triangle inequality for the function~$|~\cdot~|$}
&\geq c''\cdot d\cdot \sum _{v \in t_1}  |\rho_u-\rho_r|\\
\nonumber
\intertext{By Fact \ref{fct:heterogeneity}, }
&\geq  c''\cdot d\cdot  |l_1| \cdot ((1-\sigma_1) \cdot \rho_{r} + \sigma_1 \cdot (1-\rho_{r}) )\\
\intertext{By Proposition \ref{clm:S-homogeneity}, w.l.o.g.,
$|\sigma_{t+1} - \sigma|/\sigma \leq \frac{1}{10}$ and
$|\sigma_1 - \sigma|/\sigma \leq \frac{1}{10}$. As $\rho_r = \sigma_{t+1}$,}
& \geq \frac34 \cdot c''\cdot d\cdot  |l_1| \cdot \sigma  \\
\intertext{and by Fact \ref{fct:l1},}
&\geq c\cdot d\cdot  |S| \cdot \left(\frac{mp}q \right)^{t}\\
\end{align*}
for any $c \leq \frac34 \cdot c''$.
\end{proof}

Using Lemma 2.1 of \cite{BallardDemmelHoltzSchwartz11b} (decomposition into edge disjoint small subgraphs) we deduce that for sufficiently large \(t\),
$$h_s(Dec_tC)=\Omega\lt(\lt(\frac{mp}{q}\rt)^{\log_q s}\rt).$$
Thus there exists a constant \(c\) such that for \(s=cM^{\log_{mp}q}\), 
$s \cdot h_{s}(Dec_{t}C) \geq 3M$.
Plugging this into inequality~(\ref{eqn:IO-general-expansion}) we obtain Theorem~\ref{thm:dec-con}.

\subsection{Stretching a segment}
\label{sec:stretch}

We next consider the case where all vertices have a degree bounded by \(O(t)\).  We analyze the edge expansion of the relaxed computational graph,\footnote{See Section~\ref{sec:preliminaries} for a formal definition.} which corresponds to the same set of computations but has a constant degree bound.  We then show that an augmented partition argument (similar to that in Section~\ref{sec:partition}) results in a communication cost lower bound which is optimal up to at most a polylogarithmic factor.

Since a relaxed encoding graph has a constant degree bound we can analyze the expansion of the \(Enc_tA\) and \(Enc_tB\) parts of the computational graph by exactly the same technique used for \(Dec_tC\) above.  Plugging in the corresponding parameters, we thus obtain:
\begin{lemma}\label{lem:stretch}
Let \(G'_t\) be the relaxed computational graph of computing \(\mnpq{m^t}{n^t}{p^t}{q^t}\) based on $\langle m,n,p \rangle = q$.  Let \(Enc'_tA\) and \(Enc'_tB\) be the subgraphs corresponding to the encoding of \(A\) and \(B\) in \(G'_t\).  Then
  \[h_s(Enc'_tA)=\Omega\left(\left(\frac{mn}{q}\right)^{\log_q s}\right) 
  \quad \text{ and } \quad
  h_s(Enc'_tB)=\Omega\left(\left(\frac{np}{q}\right)^{\log_q s}\right).\]
\end{lemma}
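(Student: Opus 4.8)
The plan is to observe that, after relaxation, $Enc'_tA$ and $Enc'_tB$ are built by exactly the recursive scheme of Section~\ref{sec:recconst} and have the same layered, bounded-degree structure as $Dec_tC$, so the entire argument behind Lemma~\ref{lem:Dec-expansion} applies verbatim with the parameter $mp$ replaced by $mn$ (for $A$) or by $np$ (for $B$), and with the base component $Dec_1C$ replaced by the connected base encoding component $Enc_1A$ (respectively $Enc_1B$), whose connectivity is the hypothesis inherited from Theorem~\ref{thm:enc-con}. Concretely, $Enc'_tA$ has $t+1$ recursion levels $l_1,\dots,l_{t+1}$ with $|l_i| = (mn)^{t-i+1}q^{i-1}$: the bottom level $l_1$, of size $(mn)^t$, consists of the (relaxed) input vertices holding the entries of $A$, and the top level $l_{t+1}$, of size $q^t$, consists of the output vertices feeding the $q^t$ scalar multiplications. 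Relaxation replaces each re-used vertex by a chain of copy vertices, one per recursive level, which is precisely what makes the maximum degree a constant $d$ and keeps the number of levels $\Theta(t)$; without it, $Enc_tA$ could have vertices of degree depending on $t$ (see Proposition~\ref{fct:constant-degree} and the remark following it).

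Granting this, I would replay the four ingredients of the proof of Lemma~\ref{lem:Dec-expansion} in order. First, the analogue of Fact~\ref{fct:l1}: summing the geometric series in the level sizes gives $|l_1|/|V| = \Theta((mn/q)^t)$ — here one uses $q > mn$ (the analogue of $q>mp$), which I would record as a standing assumption, consistent with the regime $\log_{mn}q>1$ in which Theorem~\ref{thm:enc-con} is stated. Second, the analogue of Propositions~\ref{clm:delta-cost} and~\ref{clm:delta-tree-cost}: a connected copy of $Enc_1A$ joining two consecutive levels (with $mn$ vertices on the lower side and $q$ on the upper side, or the transpose) contributes at least one edge to $E(S,V\setminus S)$ unless all or none of its vertices lie in $S$; this step is exactly where connectivity of $Enc_1A$ is used. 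Third, the homogeneity-between-levels statement (analogue of Proposition~\ref{clm:S-homogeneity}): if some $\sigma_j$ deviates from $\sigma$ by at least a tenth, then $|E(S,V\setminus S)| \geq c\, d\, |S|\,(mn/q)^t$. Fourth, the recursion-tree/heterogeneity combination (Fact~\ref{fct:heterogeneity} together with the final chain of inequalities): the leaves of the tree $T_t$ have $\rho\in\{0,1\}$, so combining heterogeneity at the bottom with homogeneity between levels forces $|E(S,V\setminus S)| \geq c\, d\, |S|\,(mn/q)^t$ in the remaining case. This gives $h(Enc'_tA) = \Omega((mn/q)^t)$, and then Lemma~2.1 of \cite{BallardDemmelHoltzSchwartz11b} (decomposition into edge-disjoint small subgraphs), applied exactly as after the proof of Lemma~\ref{lem:Dec-expansion}, upgrades this to $h_s(Enc'_tA) = \Omega((mn/q)^{\log_q s})$. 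Running the identical argument with $Enc_1B$ and $np$ in place of $mn$ yields the second bound.

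The main obstacle is not any of these — each is a line-by-line transcription once the setup matches — but rather checking the setup itself: that relaxation really does turn $Enc_tA$ and $Enc_tB$ into constant-degree layered graphs with the clean recursive product structure the $Dec_tC$ proof assumes. One must verify that the copy chains introduced by relaxation have length $O(t)$, so that the degree bound is a true constant and there are only $\Theta(t)$ levels; this is where the ``no multiply-copied inputs'' hypothesis of Theorem~\ref{thm:enc-con} enters, since it ensures each base-case input is copied out at most once, so relaxing across the $t$ recursion levels produces chains of length at most $t$ rather than longer ones that would inflate the degree. Apart from that, the only subtlety worth flagging is the standing requirement $q>mn$ (respectively $q>np$) that makes $mn/q$ (respectively $np/q$) a genuine decay factor and identifies $l_1$ as the smallest level, mirroring the role of $q>mp$ in Lemma~\ref{lem:Dec-expansion}.
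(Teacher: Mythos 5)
Your proposal is correct and matches the paper's approach exactly: the paper proves Lemma~\ref{lem:stretch} by the one-line observation that the relaxed encoding graphs have constant degree, so the technique of Lemma~\ref{lem:Dec-expansion} applies verbatim with $mp$ replaced by $mn$ or $np$, which is precisely the argument you spell out. Your additional remarks on the standing assumptions (connectivity of the base encoding graph, $q>mn$, and the role of the no-multiply-copied-inputs condition in keeping the relaxation's copy chains of length $O(t)$) are accurate and consistent with the surrounding text.
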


Consider a CDAG \(G\) with maximum degree \(O(t)\) and its corresponding relaxed CDAG \(G'\) of constant degree.  Given the expansion of \(G'\) we would like to deduce the communication cost incurred by computing \(G\).  To this end we need amended versions of inequalities~(\ref{eqn:IO-general-expansion-condition}) and~(\ref{eqn:IO-general-expansion}); since by transforming $G'$ back to $G$ \(|R_s|+|W_s|\) may contract by a factor of \(O(t)\), we need to compensate for that by increasing the segment size \(s\).  To be precise, we want 
\(\frac{|R_s|+|W_s|}{ct}-2M=M.\)
Following inequality~(\ref{eqn:IO-general-expansion-condition}), we thus choose the minimal \(s\) such that \(h_s(Enc_tA)\cdot s\geq c'tM\), where \(c'\) is some universal constant.  By inequality~(\ref{eqn:IO-general-expansion}) and~Lemma~\ref{lem:stretch}, 
\(\left(\frac{mn}{q}\right)^{\log_q s}\cdot s=\Theta(tM),\)
so 
\[W=\Omega\left(\frac{q^t}{(tM)^{\log_{mn}q}}M\right)\]
and Theorem~\ref{thm:enc-con} follows.

\subsection{Disconnected encoding or decoding graphs}
The CDAG of any fast (rectangular or square) matrix multiplication algorithm must be connected, due to the dependencies of the output entries on the input entries.  The encoding and decoding graphs, however, are not always connected (see e.g., Bini's algorithm, in Section~\ref{sec:examples:bini} and Appendix~\ref{app:details}).  Consider a case where each connected components of \(Dec_tC\) is small enough to fit into the fast memory.  Then our proof technique cannot provide a nontrivial lower bound.  Even if a connected component is larger than \(M\), but has \(\leq M\) inputs and \(\leq M\) outputs, the partition into segments approach provides no communication cost lower bound (see inequality~(\ref{eqn:RW}) and its proof).  In the case that the inputs of an encoding graph or the output of the decoding graph do not fit into fast memory, and the disconnected components all have the same number of input and output vertices, the lower bound technique still applies.  Formally,

\begin{corollary} \label{thm:dec-discon}
If the base-case decoding graph is disconnected and consists of \(X\) connected components of equal input and output size, then
\(W=\Omega\left(\frac{q^t}{M^{\log_{mp/X}(q/X)-1}}\right).\)
\end{corollary}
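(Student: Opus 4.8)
The plan is to reduce to the already-established connected decoding case (Lemma~\ref{lem:Dec-expansion} and Theorem~\ref{thm:dec-con}) by separating $Dec_tC$ into its connected components. Write $Dec_1C = C_1 \sqcup \cdots \sqcup C_X$, where by hypothesis each $C_j$ is connected and has exactly $q/X$ input vertices and $mp/X$ output vertices. The first step is a purely structural claim, proved by induction on $t$ using the top-down construction of Section~\ref{sec:recconst}: $Dec_tC$ is a disjoint union of exactly $X^t$ connected subgraphs, each of which is produced by that same recursive construction but with every occurrence of the base block $Dec_1C$ replaced by one of the components $C_j$. The point is that the vertex identifications in Section~\ref{sec:recconst} send all of the ``inner'' $Dec_1C$-copies feeding a given ``outer'' sub-block into a single component $C_j$, so the construction never glues two distinct components of $Dec_1C$ together; different $C_j$ may, however, appear at different positions and levels inside one component. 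Consequently each component $D$ of $Dec_tC$ is a layered graph with $t+1$ levels of sizes $(q/X)^i (mp/X)^{t-i}$, $i = 0, \ldots, t$; in particular $|D| = \Theta((q/X)^t)$ and $|Dec_tC| = X^t \cdot \Theta((q/X)^t) = \Theta(q^t)$.

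Next, note that the proof of Lemma~\ref{lem:Dec-expansion} uses only that each base block is connected and contributes a fixed number of vertices to two consecutive levels (with the input side strictly larger, which needs $q > mp$); it never uses that the blocks at a given level are isomorphic to one another. Hence that proof carries over verbatim to each component $D$ with $mp$ and $q$ replaced by $mp/X$ and $q/X$, taking the universal constant to be the minimum of the finitely many constants $c''(C_j)$ produced for the base components. This gives, uniformly over all $X^t$ components, $h(D) = \Omega((mp/q)^t)$, and then by the edge-disjoint small-subgraph decomposition (Lemma~2.1 of \cite{BallardDemmelHoltzSchwartz11b}), $h_s(D) = \Omega((mp/q)^{\log_{q/X} s})$. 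Since distinct connected components of $Dec_tC$ share no edges, for every $s < \frac12 \min_j |D_j| = \Theta((q/X)^t)$ any vertex set $U$ with $|U| \le s$ meets each component in fewer than half of its vertices, so the within-component edge-expansion bounds apply and sum to $h_s(Dec_tC) \ge h_s(D) = \Omega((mp/q)^{\log_{q/X} s})$.

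The conclusion then follows by rerunning the partition argument from the end of Section~\ref{sec:exprect} with $mp$ and $q$ replaced by $mp/X$ and $q/X$. Using the identity $s \cdot (mp/q)^{\log_{q/X} s} = (mp/X)^{\log_{q/X} s}$, one finds a constant $c$ for which $s = c\, M^{\log_{mp/X}(q/X)}$ satisfies $s \cdot h_s(Dec_tC) \ge 3M$; this $s$ meets the constraint $s < \frac12 \min_j |D_j|$ exactly when $M = O((mp/X)^t)$, which is the restriction on fast-memory size anticipated in Section~\ref{sec:intro:lb} (for larger $M$ the asserted bound is weaker than the trivial bound $W \ge (mp)^t - M$ forced by writing the output, so it holds regardless). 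Substituting this $s$ and $|V| = |Dec_tC| = \Theta(q^t)$ into inequality~(\ref{eqn:IO-general-expansion}) yields $W \ge \frac{|Dec_tC|}{s} M = \Omega(q^t / M^{\log_{mp/X}(q/X) - 1})$, which is the statement of the corollary. Throughout one assumes $1 < mp/X < q/X$, i.e.\ $X < mp < q$, so that the logarithms and the geometric-series identities underlying Lemma~\ref{lem:Dec-expansion} are valid; if $X = mp$, each component has a single output vertex and this route says nothing.

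The main obstacle is the structural claim of the first paragraph: one must check carefully that the identifications prescribed in Section~\ref{sec:recconst} never merge two different components of $Dec_1C$ across recursive levels, so that $Dec_tC$ indeed decomposes into $X^t$ vertex-disjoint, recursively-built blocks of the stated dimensions. A convenient way to organize the induction is to track, for each component, the sequence of base components $C_j$ appearing at its levels. Granting this bookkeeping, everything else is the connected-case argument rerun with the rescaled parameters $mp/X$ and $q/X$.
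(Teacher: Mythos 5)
Your proposal is correct and follows essentially the same route as the paper's proof: decompose $Dec_tC$ into its $X^t$ connected components, apply Lemma~\ref{lem:Dec-expansion} to each with the rescaled parameters $q/X$ and $mp/X$, invoke the small-subgraph decomposition to get $h_s(Dec_tC)=\Omega\left(\left(\frac{mp}{q}\right)^{\log_{q/X}s}\right)$ subject to the same two constraints on $s$, and finish with the partition argument, handling large $M$ by the trivial output bound. The only difference is that you spell out the structural bookkeeping (that the recursive identifications never glue distinct base components, and that the interleaving of different $C_j$ within one component is harmless because the expansion proof only needs each block to be connected with fixed input/output sizes), which the paper leaves implicit.
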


\begin{proof}
Since \(Dec_tC\) is disconnected \(h(Dec_tC)=0\). However it consists of \(X^t\) connected components, each of which has nonzero expansion, therefore the entire graph does have expansion for small sets.  Each connected component is recursively constructed from a base graph with \(q/X\) inputs and \(mp/X\) outputs.  By Lemma~\ref{lem:Dec-expansion}, each connected component \(CC_t\) of \(Dec_tC\) has expansion
\[h(CC_t)=\Omega\left(\left(\frac{mp}{q}\right)^t\right).\]
In order to apply  Lemma 2.1 of \cite{BallardDemmelHoltzSchwartz11b} (decomposition into edge disjoint small subgraphs), we decompose \(Dec_tC\) into connected components of size \(s\), where \(s\) needs to satisfy two conditions.  First, \(s\) must be smaller than the size of the connected components of \(Dec_tC\) (otherwise we cannot claim any expansion), namely
\[s=O\left(\left(\frac{q}{X}\right)^t\right).\]
Second, \(s\) must be large enough so that the output of one component does not fit into fast memory (otherwise the expansion guarantee does not translate into a communication lower bound):
\[\left(\frac{mp}{X}\right)^{k}=\Omega(M),\]
where \(k=\log_{q/X}s\) is the number of recursive steps inside one component.
We then deduce that
\[h_s(Dec_tC)=\Omega\left(\left(\frac{mp}{q}\right)^{\log_{q/X}s}\right).\]
Thus there exists a constant \(c\) such that for \(s=cM^{\log_{mp/X}(q/X)}\), 
$s \cdot h_{s}(Dec_{t}C) \geq 3M$.
Plugging this into inequality~(\ref{eqn:IO-general-expansion}) we obtain Corollary~\ref{thm:dec-discon}.  Note that in the case that \(M=\Omega\lt(\left(\frac{mp}{X}\right)^t\rt)\), the argument above does not apply, but the result still holds because it is weaker than the trivial bound that the entire output must be written: \(W=\Omega\left((mp)^t\right)\).
\end{proof}

\begin{corollary} \label{thm:enc-discon}
If a base-case encoding graph is disconnected and consists of \(X\) connected components of equal input and output size, has \(N\) inputs, where \(N=mn\) or \(N=np\), and has no multiply-copied inputs, then
\(W=\Omega\left(\frac{q^t}{t^{\log_{N/x}(q/X)}M^{\log_{N/X}(q/X)-1}}\right).\)
\end{corollary}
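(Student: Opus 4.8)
The plan is to mirror the argument for the disconnected decoding case (Corollary~\ref{thm:dec-discon}), but carried out on the relaxed encoding graph so that the ``stretching a segment'' mechanism of Section~\ref{sec:stretch} can be applied. First I would observe that since the base-case encoding graph is disconnected with $X$ components, the graph $Enc_tA$ (or $Enc_tB$) decomposes into $X^t$ connected components, each of which is itself the relaxed encoding graph of a recursive application of one base-case component. Each such base-case component has $N/X$ inputs and $q/X$ outputs, and since there are no multiply-copied inputs the relaxed version has constant bounded degree. Applying Lemma~\ref{lem:stretch} (with parameters $N/X$ and $q/X$ in place of $N$ and $q$) to a single connected component $CC'_t$ of the relaxed graph $Enc'_tA$ gives
\[
h_s(CC'_t)=\Omega\left(\left(\frac{N/X}{q/X}\right)^{\log_{q/X}s}\right)=\Omega\left(\left(\frac{N}{q}\right)^{\log_{q/X}s}\right).
\]

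Next I would invoke Lemma~2.1 of \cite{BallardDemmelHoltzSchwartz11b} to decompose $Enc'_tA$ into edge-disjoint subgraphs each contained in one connected component and of size $s$, where $s$ must satisfy two constraints exactly as in the proof of Corollary~\ref{thm:dec-discon}: $s$ must be at most the component size, $s=O\!\left((q/X)^t\right)$, and $s$ must be large enough that the input operands of one size-$s$ piece do not all fit in fast memory, i.e.\ $(N/X)^{\log_{q/X}s}=\Omega(M)$. This yields $h_s(Enc'_tA)=\Omega\!\left((N/q)^{\log_{q/X}s}\right)$ for the whole graph on small sets. Then I would apply the amended partition argument of Section~\ref{sec:stretch}: because passing from the relaxed graph $G'$ back to the actual CDAG $G$ can contract $|R_S|+|W_S|$ by a factor $O(t)$, I choose the minimal $s$ with $h_s(Enc_tA)\cdot s\geq c'tM$, i.e.\ $\left(\tfrac{N}{X}\right)^{\log_{q/X}s}\cdot s=\Theta(tM)$, which gives $s=\Theta\!\left((tM)^{\log_{N/X}(q/X)}\right)$. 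Substituting this $s$ and $|V|=\Theta(q^t)$ into inequality~(\ref{eqn:IO-general-expansion}), $W\geq \frac{|V|}{s}\cdot M$, produces
\[
W=\Omega\left(\frac{q^t}{(tM)^{\log_{N/X}(q/X)}}\,M\right)=\Omega\left(\frac{q^t}{t^{\log_{N/X}(q/X)}M^{\log_{N/X}(q/X)-1}}\right),
\]
matching the claimed bound (with the typo $x$ read as $X$).

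The main obstacle I anticipate is the bookkeeping around the two competing size constraints on $s$ together with the $O(t)$ stretch factor: one must check that the regime where $M$ is so large that $(N/X)^t=O(M)$ — in which the decomposition argument degenerates — is covered by the trivial lower bound $W=\Omega(N^t)=\Omega((mn)^t)$ (reading the inputs of $A$), exactly as in the parenthetical remark ending the proof of Corollary~\ref{thm:dec-discon}. A secondary point requiring care is verifying that the ``no multiply-copied inputs'' hypothesis is precisely what guarantees that each connected component of the relaxed encoding graph still has constant degree after the recursive construction, so that Lemma~\ref{lem:stretch} genuinely applies componentwise; this is where the hypothesis is used, and it should be stated explicitly. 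The remaining steps are routine adaptations of calculations already carried out in Lemma~\ref{lem:Dec-expansion}, Lemma~\ref{lem:stretch}, and Corollary~\ref{thm:dec-discon}.
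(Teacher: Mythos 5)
Your proposal is correct and follows essentially the same route as the paper's proof: pass to the relaxed encoding graph, apply the expansion analysis componentwise with parameters $N/X$ and $q/X$, decompose via Lemma~2.1 of \cite{BallardDemmelHoltzSchwartz11b} subject to the two size constraints on $s$, compensate for the $O(t)$ contraction when returning from $G'$ to $G$, and cover the large-$M$ regime with the trivial bound $W=\Omega(N^t)$. The only slip is notational: the condition determining $s$ should read $\left(\frac{N}{q}\right)^{\log_{q/X}s}\cdot s=\Theta(tM)$, equivalently $\left(\frac{N}{X}\right)^{\log_{q/X}s}=\Theta(tM)$, rather than $\left(\frac{N}{X}\right)^{\log_{q/X}s}\cdot s=\Theta(tM)$; the value $s=\Theta\left((tM)^{\log_{N/X}(q/X)}\right)$ you derive from it, and the resulting bound, are correct.
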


\begin{proof}
Let \(G'_t\) be the relaxed computational graph of computing \(\mnpq{m^t}{n^t}{p^t}{q^t}\) based on $\langle m,n,p \rangle = q$.  Let \(Enc'_t\) be the subgraph corresponding to the encoding of \(A\) or \(B\) in \(G'_t\), and \(N\) be \(mn\) (for the encoding of \(A\)) or \(np\) (for the encoding of \(B\)).  Then by the same argument as above,
  \[h_s(Enc'_t)=\Omega\left(\left(\frac{N}{q}\right)^{\log_{q/X} s}\right).\]

Since by transforming $G'$ back to $G$ the sum \(|R_s|+|W_s|\) may contract by a factor of \(O(t)\) (recall Section~\ref{sec:stretch}), we need to compensate for that by increasing the segment size \(s\).  Thus the above only holds for
\[\left(\frac{N}{X}\right)^{k}=\Omega(Mt),\]
where \(k=\log_{q/X}s\).
It follows that there exists a constant \(c\) such that for \(s=c(tM)^{\log_{mp/X}(q/X)}\), 
$s \cdot h_{s}(Enc'_{t}) \geq 3tM$.
Plugging this into inequality~(\ref{eqn:IO-general-expansion}) we obtain Corollary~\ref{thm:enc-discon}.
Note that in the case that \(M=\Omega\lt(\left(\frac{N}{X}\right)^t\rt)\), the argument above does not apply, but the result still holds because it is weaker than the trivial bound that the entire input must be read: \(W=\Omega\left(N^t\right)\).
\end{proof}

\section{The Communication Costs of Some Rectangular Matrix Multiplication Algorithms}
\label{sec:examples}

In this section we apply our main results to get new lower bounds for rectangular algorithms based on Bini's algorithm \cite{BiniCapovaniRomaniLotti79} and the Hopcroft-Kerr algorithm \cite{HopcroftKerr71}.  All rectangular algorithms yield a square algorithm.  In the case of Bini the exponent is \(\omega_0\approx 2.779\), slightly better than Strassen's algorithm (\(\omega_0\approx2.807\)), and in the case of Hopcroft-Kerr the exponent is \(\omega_0\approx2.811\), slightly worse than Strassen's algorithm.  These algorithms are stated explicitly, which is not true of most of the recent results that significantly improve \(\omega_0\).  See Table~\ref{table:examples} for an enumeration of several algorithms based on \cite{BiniCapovaniRomaniLotti79,HopcroftKerr71} and their lower bounds.

\subsection{Bini's algorithm}\label{sec:examples:bini}
Bini et~al.~\cite{BiniCapovaniRomaniLotti79} obtained the first approximate matrix multiplication algorithm.  They introduce a parameter \(\lambda\) into the computation and give an algorithm that computes matrix multiplication up to terms of order \(\lambda\).  It was later shown how to convert such approximate algorithms into exact algorithms without changing the asymptotic arithmetic complexity, ignoring logarithmic factors \cite{Bini80}.\footnote{We treat here the original, approximate algorithm, not any of the exact algorithms that can be derived from it.}

Bini et al. show how to compute \(2\times2\times2\) matrix multiplication approximately where one of the off-diagonal entries of an input matrix is zero using 5 scalar multiplications.  This can be used twice to give an algorithm for \(\langle3,2,2\rangle=10\) matrix multiplication.  Notably this algorithm has disconnected \(Enc_1 A\) (see Figure~\ref{fig:binigraph}).
\begin{figure}
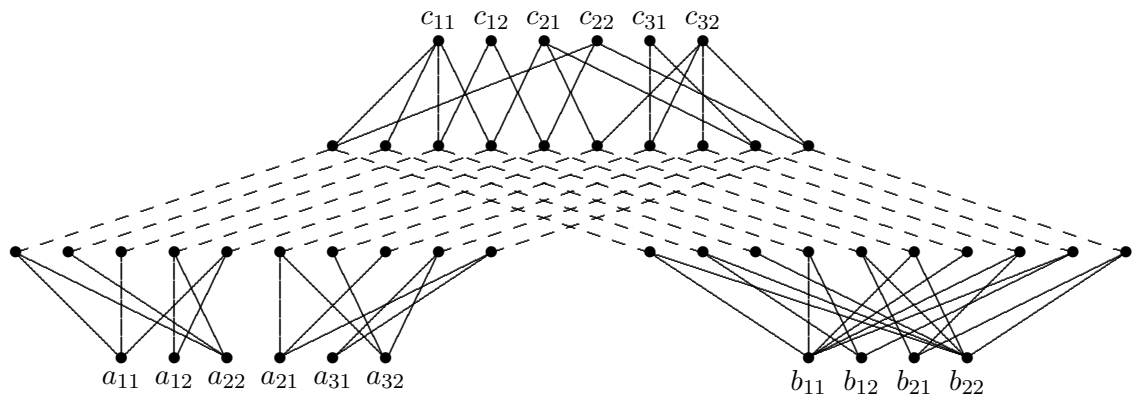

$$
\begindc{\undigraph}[20]
\obj(6,4)[q1]{}
\obj(7,4)[q2]{}
\obj(8,4)[q3]{}
\obj(9,4)[q4]{}
\obj(10,4)[q5]{}
\obj(11,4)[q6]{}
\obj(12,4)[q7]{}
\obj(13,4)[q8]{}
\obj(14,4)[q9]{}
\obj(15,4)[q10]{}
\obj(8,6)[c1]{$c_{11}$}
\obj(10,6)[c2]{$c_{21}$}
\obj(12,6)[c3]{$c_{31}$}
\obj(9,6)[c4]{$c_{12}$}
\obj(11,6)[c5]{$c_{22}$}
\obj(13,6)[c6]{$c_{32}$}
\mor{q1}{c1}{}
\mor{q2}{c1}{}
\mor{q3}{c1}{}
\mor{q4}{c1}{}
\mor{q4}{c2}{}
\mor{q6}{c2}{}
\mor{q9}{c2}{}
\mor{q7}{c3}{}
\mor{q9}{c3}{}
\mor{q3}{c4}{}
\mor{q5}{c4}{}
\mor{q1}{c5}{}
\mor{q5}{c5}{}
\mor{q10}{c5}{}
\mor{q6}{c6}{}
\mor{q7}{c6}{}
\mor{q8}{c6}{}
\mor{q10}{c6}{}
\obj(0,2)[t1]{}
\obj(1,2)[t2]{}
\obj(2,2)[t3]{}
\obj(3,2)[t4]{}
\obj(4,2)[t5]{}
\obj(5,2)[t6]{}
\obj(6,2)[t7]{}
\obj(7,2)[t8]{}
\obj(8,2)[t9]{}
\obj(9,2)[t10]{}
\obj(2,0)[a1]{$a_{11}$}[5]
\obj(3,0)[a2]{$a_{12}$}[5]
\obj(4,0)[a4]{$a_{22}$}[5]
\obj(5,0)[a3]{$a_{21}$}[5]
\obj(6,0)[a5]{$a_{31}$}[5]
\obj(7,0)[a6]{$a_{32}$}[5]
\mor{t1}{a1}{}
\mor{t3}{a1}{}
\mor{t5}{a1}{}
\mor{t6}{a3}{}
\mor{t8}{a3}{}
\mor{t10}{a3}{}
\mor{t1}{a4}{}
\mor{t2}{a4}{}
\mor{t4}{a4}{}
\mor{t4}{a2}{}
\mor{t5}{a2}{}
\mor{t9}{a5}{}
\mor{t10}{a5}{}
\mor{t6}{a6}{}
\mor{t7}{a6}{}
\mor{t9}{a6}{}
\obj(12,2)[s1]{}
\obj(13,2)[s2]{}
\obj(14,2)[s3]{}
\obj(15,2)[s4]{}
\obj(16,2)[s5]{}
\obj(17,2)[s6]{}
\obj(18,2)[s7]{}
\obj(19,2)[s8]{}
\obj(20,2)[s9]{}
\obj(21,2)[s10]{}
\obj(15,0)[b1]{$b_{11}$}[5]
\obj(17,0)[b2]{$b_{21}$}[5]
\obj(16,0)[b3]{$b_{12}$}[5]
\obj(18,0)[b4]{$b_{22}$}[5]
\mor{s1}{b1}{}
\mor{s4}{b1}{}
\mor{s6}{b1}{}
\mor{s7}{b1}{}
\mor{s8}{b1}{}
\mor{s9}{b1}{}
\mor{s5}{b2}{}
\mor{s8}{b2}{}
\mor{s10}{b2}{}
\mor{s2}{b3}{}
\mor{s4}{b3}{}
\mor{s9}{b3}{}
\mor{s1}{b4}{}
\mor{s2}{b4}{}
\mor{s3}{b4}{}
\mor{s5}{b4}{}
\mor{s6}{b4}{}
\mor{s10}{b4}{}
\mor{s1}{q1}{}[1,\dashline]
\mor{s2}{q2}{}[1,\dashline]
\mor{s3}{q3}{}[1,\dashline]
\mor{s4}{q4}{}[1,\dashline]
\mor{s5}{q5}{}[1,\dashline]
\mor{s6}{q6}{}[1,\dashline]
\mor{s7}{q7}{}[1,\dashline]
\mor{s8}{q8}{}[1,\dashline]
\mor{s9}{q9}{}[1,\dashline]
\mor{s10}{q10}{}[1,\dashline]
\mor{t1}{q1}{}[1,\dashline]
\mor{t2}{q2}{}[1,\dashline]
\mor{t3}{q3}{}[1,\dashline]
\mor{t4}{q4}{}[1,\dashline]
\mor{t5}{q5}{}[1,\dashline]
\mor{t6}{q6}{}[1,\dashline]
\mor{t7}{q7}{}[1,\dashline]
\mor{t8}{q8}{}[1,\dashline]
\mor{t9}{q9}{}[1,\dashline]
\mor{t10}{q10}{}[1,\dashline]
\enddc
$$
\caption{Computational graph for 1 level of Bini's \(\langle 3,2,2\rangle=10\) algorithm.  Solid lines indicate dependencies of additions and make up \(Enc_1A\), \(Enc_1B\), and \(Dec_1C\).  Dashed lines indicate dependencies of multiplications and connect these three subgraphs.  Note that \(Enc_1A\), the bottom-left part of the graph, is disconnected and has two connected components of equal size and equal input/output ratio.
Note that the base-case graph of Bini's algorithm is presented, for simplicity, with vertices of in-degree larger than two.  A vertex of degree larger than two, in fact, represents a full binary (not necessarily balanced) tree.  The expansion arguments hold for any way of drawing the binary trees.}
\label{fig:binigraph}
\end{figure}

From this \(\mnpq322{10}\) algorithm one immediately obtains 5 more algorithms by transposition and interchanging the encoding and decoding graphs \cite{HopcroftMusinsky73}.  Other algorithms can be constructed by taking tensor products of these base cases. When taking tensor products, the number of connected components of each encoding and decoding graph is the product of the number of connected components in the base cases.  For example there are 4 ways to construct algorithms for \(\mnpq664{100}\): one where \(Enc_1A\) and \(Enc_1B\) each have two components, one where \(Enc_1A\) and \(Dec_1C\) each have two components, one where \(Enc_1B\) and \(Dec_1C\) each have two components, and one where \(Enc_1A\) has four components.  Similarly there are 8 ways to construct algorithms for the square multiplication \(\mnpq{12}{12}{12}{1000}\).

\begin{table}[!ht]
\small
\begin{center}
  \begin{tabular}{ | c | c c | c | c | c | }
    \hline
    & Algorithm & Disconnected & Communication Cost Lower Bound & by & Tight? \\\hline\hline
    \multirow{14}{*}{\rotatebox{90}{Bini et.~al.~\cite{BiniCapovaniRomaniLotti79}}} 
    & \(\mnpq322{10}\) & \(EncA\) & $10^t/M^{\log_6 10-1}$ & Thm~\ref{thm:dec-con} & Yes\\\cline{2-6}
    & \multirow{2}{*}{\(\mnpq322{10}\)} & \multirow{2}{*}{\(DecC\)} & \(10^t/(t^{\log_6 10}M^{\log_6 10-1})\) & Thm~\ref{thm:enc-con} & Up to polylog factor\\
    & & & \(10^t/(M^{\log_3 5-1})\) & Cor~\ref{thm:dec-discon} & No\\\cline{2-6}
    & \multirow{2}{*}{\(\mnpq232{10}\)} & \multirow{2}{*}{\(EncA\)} & \(10^t/M^{\log_4 10-1}\) & Thm~\ref{thm:dec-con} & No\\
    & & & \(10^t/(t^{\log_6 10}M^{\log_6 10-1})\) & Thm~\ref{thm:enc-con} & Up to polylog factor\\\cline{2-6}
    & \multirow{2}{*}{\(\mnpq232{10}\)} & \multirow{2}{*}{\(EncB\)} & \(10^t/M^{\log_4 10-1}\) & Thm~\ref{thm:dec-con} & No\\
    & & & \(10^t/(t^{\log_6 10}M^{\log_6 10-1})\) & Thm~\ref{thm:enc-con} & Up to polylog factor\\\cline{2-6}
    & \(\mnpq223{10}\) & \(EncB\) & $10^t/M^{\log_6 10-1}$ & Thm~\ref{thm:dec-con} & Yes\\\cline{2-6}
    & \multirow{2}{*}{\(\mnpq223{10}\)} & \multirow{2}{*}{\(DecC\)} & \(10^t/(t^{\log_6 10}M^{\log_6 10-1})\) & Thm~\ref{thm:enc-con} & Up to polylog factor\\
    & & & \(10^t/(M^{\log_3 5-1})\) & Cor~\ref{thm:dec-discon} & No\\\cline{2-6}
    & \multirow{2}{*}{\(\mnpq664{100}\)} & \multirow{2}{*}{\(EncA,EncB\)} & \(100^t/M^{\log_{24}100-1}\) & Thm~\ref{thm:dec-con} & No\\
    & & & \(100^t/(t^{\log_{18}50}M^{\log_{18}50-1})\) & Cor~\ref{thm:enc-discon} & No\\\cline{2-6}
    & \(\mnpq{12}{12}{12}{1000}\) & \(EncA,EncB\) & \(1000^t/M^{\log_{144}1000-1}\) & \cite{BallardDemmelHoltzSchwartz11b} & Yes\\\hline\hline
    \multirow{10}{*}{\rotatebox{90}{Hopcroft-Kerr \cite{HopcroftKerr71}}}
    & \(\mnpq323{15}\) & None & \(15^t/M^{\log_9 15-1}\) & Thm~\ref{thm:dec-con} & Yes\\\cline{2-6}
    & \multirow{2}{*}{\(\mnpq332{15}\)} & \multirow{2}{*}{None} & \(15^t/M^{\log_6 15-1}\) & Thm~\ref{thm:dec-con} & No\\
    & & & \(15^t/(t^{\log_9 15}M^{\log_9 15-1})\) & Thm~\ref{thm:enc-discon} & Up to polylog factor\\\cline{2-6}
    & \multirow{2}{*}{\(\mnpq233{15}\)} & \multirow{2}{*}{None} & \(15^t/M^{\log_6 15-1}\) & Thm~\ref{thm:dec-con} & No\\
    & & & \(15^t/(t^{\log_9 15}M^{\log_9 15-1})\) & Thm~\ref{thm:enc-discon} & Up to polylog factor\\\cline{2-6}
    & \(\mnpq966{225}\) & None & \(225^t/M^{\log_{54} 225-1}\) & Thm~\ref{thm:dec-con} & Yes\\\cline{2-6}
    & \(\mnpq669{225}\) & None & \(225^t/M^{\log_{54} 225-1}\) & Thm~\ref{thm:dec-con} & Yes\\\cline{2-6}
    & \multirow{2}{*}{\(\mnpq696{225}\)} & \multirow{2}{*}{None} & \(225^t/M^{\log_{36} 225-1}\) & Thm~\ref{thm:dec-con} & No\\
    & & & \(225^t/(t^{\log_{54} 225}M^{\log_{54} 225-1})\) & Thm~\ref{thm:enc-discon} & Up to polylog factor\\\cline{2-6}
    & \(\mnpq{18}{18}{18}{3375}\) & None & \(3375^t/M^{\log_{324} 3375-1}\) & \cite{BallardDemmelHoltzSchwartz11b} & Yes\\\hline
  \end{tabular}
\end{center}
\caption{Asymptotic lower bounds for several variants of the algorithms by Bini et~al.~and Hopcroft-Kerr.  Many more with different shapes and with different disconnected subgraphs can be given for Bini's algorithm, and analyzed by similar means; we list only a representative sample.  Recall that the base case \(\mnpq mnpq\) is used for the computation of \(\mnpq{m^t}{n^t}{p^t}{q^t}\).}
\label{table:examples}
\end{table}

\subsection{The Hopcroft-Kerr algorithm}
Hopcroft and Kerr \cite{HopcroftKerr71} provide an algorithm for \(\mnpq323{15}\), and prove that fewer than 15 scalar multiplications is not possible.  In their algorithm, all the encoding and decoding graphs are connected.  Thus, only Theorems~\ref{thm:dec-con} and~\ref{thm:enc-con} are necessary for proving the lower bounds.  For the square case \(\mnpq{18}{18}{18}{3375}\), Theorem~\ref{thm:dec-con} reproduces the result of \cite{BallardDemmelHoltzSchwartz11b}.

\section{Discussion and Open Problems}
\label{sec:discussion}

Using graph expansion analysis we obtain tight lower bounds on recursive rectangular matrix multiplication algorithms in the case that the output matrix is at least as large as the input matrices, and the decoding graph is connected.  We also obtain a similar bound in the case that the encoding graph of the largest matrix is connected, which is tight up to a factor that is polylogarithmic in the input, assuming no multiply copied inputs.  Finally we extend these bounds to some disconnected cases, with restrictions on the fast memory size.  Whenever the decoding graph is not the largest of the three subgraphs (equivalently, whenever the output matrix is smaller than one of the input matrices), or when the largest graph is disconnected, our bounds are not tight.  

\subsection{Limitations of the lower bounds.}

There are several cases when our lower bounds do not apply.  These are cases where the full algorithm is a hybrid of several base algorithms combined in an arbitrary sequence.  Consider the case where two base algorithms are applied recursively.  If the recursion alternates between them, our lower bounds apply to the tensor product of the two base cases, which can be thought of as taking two recursive steps at once.   However, for cases of arbitrary choice of which base case to apply at each recursive step, we do not provide communication cost lower bounds.  The technical difficulty in extending our results in this case lies in generalizing the recursive construction of the decoding graph given in Section~\ref{sec:recconst}.  Similarly, if the base-case decoding (or encoding) graph is disconnected and contains several connected components of different sizes, our bounds do not apply.  In this case the connected components of the entire decoding (or encoding) graph are constructed out of all possible interleavings of the different connected components.  Finally, the lower bounds do not apply to algorithms that are not recursive, including approximate algorithms that are not bilinear.

\subsection{Parallel case.}
\label{sec:parallel}
Although our main focus is on the sequential case, we note that the sequential communication bounds presented here can be generalized to communication bounds in the distributed-memory parallel model of \cite{BallardDemmelHoltzLipshitzSchwartz12a}.  The lower bound proof technique here can be extended to obtain both memory-dependent and memory-independent parallel bounds as in \cite{BallardDemmelHoltzLipshitzSchwartz12b}.  Further, the Communication Avoiding Parallel Strassen \emph{(CAPS)} algorithm presented in \cite{BallardDemmelHoltzLipshitzSchwartz12a} is shown to be communication-optimal and faster (both theoretically and empirically) than previous attempts to parallelize Strassen's algorithm \cite{BallardDemmelLipshitzSchwartz12a}.  The parallelization approach of CAPS is general, and in particular it can be applied to rectangular matrix multiplication, giving a communication upper bound which matches the lower bounds in the same circumstances as in the sequential case.

\subsection{Blackbox use of fast square matrix multiplication algorithms.}
Instead of using a fast rectangular matrix multiplication algorithm, one can perform rectangular matrix multiplication of the form \(\langle m^t,n^t,p^t\rangle\) with fewer than the na\"ive number of \((mnp)^t\) multiplications by blackbox use of a square matrix multiplication algorithm with exponent \(\omega_0\) (that is, an algorithm for multiplying \(n\times n\) matrices with \(O(n^{\omega_0})\) flops). The idea is to break up the original problem into \(\left(\frac{m^t}{n^t}\right)\cdot\lt(\frac{p^t}{n^t}\rt)\) square matrix multiplication problems of size \((n^t)\times(n^t)\).\footnote{Assume, for simplicity, that \(n<m,p\).}  The arithmetic cost of such a blackbox algorithm is \(\Theta((mpn^{\omega_0-2})^t)\).  Using the upper and lower bounds in \cite{BallardDemmelHoltzSchwartz11b}, the communication cost is
\(\Theta\lt(\frac{(mpn^{\omega_0-2})^t}{M^{\omega_0/2-1}}\rt).\)

We note that, in some cases, blackbox use of a square algorithm may give a lower communication cost than a rectangular algorithm, even if it has a higher arithmetic cost.  In particular, if \(q<mpn^{\omega_0-2}\), then the rectangular algorithm performs asymptotically fewer flops.  It is possible to have simultaneously \(\omega_0/2>\log_{mp}q\), meaning that for certain values of \(M\) and \(t\) the communication cost of the rectangular algorithm is higher.  On some machines, the arithmetically slower algorithm may require less total time if the communication cost dominates.

\bibliographystyle{plain}
\bibliography{TAPAS}

\begin{thebibliography}{10}

\bibitem{AlonSchwartzShapira08}
N.~Alon, O.~Schwartz, and A.~Shapira.
\newblock An elementary construction of constant-degree expanders.
\newblock {\em Combinatorics, Probability {\&} Computing}, 17(3):319--327,
  2008.

\bibitem{BallardDemmelHoltzLipshitzSchwartz12b}
G.~Ballard, J.~Demmel, O.~Holtz, B.~Lipshitz, and O.~Schwartz.
\newblock Brief announcement: strong scaling of matrix multiplication
  algorithms and memory-independent communication lower bounds.
\newblock In {\em Proceedings of the 24th ACM Symposium on Parallelism in
  Algorithms and Architectures}, SPAA '12, pages 77--79, New York, NY, USA,
  2012. ACM.

\bibitem{BallardDemmelHoltzLipshitzSchwartz12a}
G.~Ballard, J.~Demmel, O.~Holtz, B.~Lipshitz, and O.~Schwartz.
\newblock Communication-optimal parallel algorithm for {S}trassen's matrix
  multiplication.
\newblock In {\em Proceedings of the 24th ACM Symposium on Parallelism in
  Algorithms and Architectures}, SPAA '12, pages 193--204, New York, NY, USA,
  2012. ACM.

\bibitem{BallardDemmelHoltzSchwartz11a}
G.~Ballard, J.~Demmel, O.~Holtz, and O.~Schwartz.
\newblock Minimizing communication in numerical linear algebra.
\newblock {\em SIAM J. Matrix Analysis Applications}, 32(3):866--901, 2011.

\bibitem{BallardDemmelHoltzSchwartz11b}
G.~Ballard, J.~Demmel, O.~Holtz, and O.~Schwartz.
\newblock Graph expansion and communication costs of fast matrix
  multiplication.
\newblock {\em J. ACM}, Accepted 2012.

\bibitem{BallardDemmelLipshitzSchwartz12a}
G.~Ballard, J.~Demmel, B.~Lipshitz, and O.~Schwartz.
\newblock Communication-avoiding parallel {S}trassen: Implementation and
  performance.
\newblock In {\em Proceedings of 2012 International Conference for High
  Performance Computing, Networking, Storage and Analysis}, SC '12, New York,
  NY, USA, 2012. ACM.

\bibitem{BelingMegiddo98}
P.~Beling and N.~Megiddo.
\newblock Using fast matrix multiplication to find basic solutions.
\newblock {\em Theoretical Computer Science}, 205(1Ð2):307 -- 316, 1998.

\bibitem{BilardiPietracaprinaD'Alberto00}
G.~Bilardi, A.~Pietracaprina, and P.~D'Alberto.
\newblock On the space and access complexity of computation {DAGs}.
\newblock In {\em WG '00: Proceedings of the 26th International Workshop on
  Graph-Theoretic Concepts in Computer Science}, pages 47--58, London, UK,
  2000. Springer-Verlag.

\bibitem{BilardiPreparata99}
G.~Bilardi and F.~Preparata.
\newblock Processor-time tradeoffs under bounded-speed message propagation:
  Part {II}, lower boundes.
\newblock {\em Theory of Computing Systems}, 32(5):1432--4350, 1999.

\bibitem{Bini80}
D.~Bini.
\newblock Relations between exact and approximate bilinear algorithms.
  applications.
\newblock {\em Calcolo}, 17:87--97, 1980.
\newblock 10.1007/BF02575865.

\bibitem{BiniCapovaniRomaniLotti79}
D.~Bini, M.~Capovani, F.~Romani, and G.~Lotti.
\newblock {$O(n^{2.7799})$} complexity for {$n \times n$} approximate matrix
  multiplication.
\newblock {\em Information Processing Letters}, 8(5):234 -- 235, 1979.

\bibitem{BurgisserClausenShokrollahi97}
P.~B{\H{u}}rgisser, M.~Clausen, and M.~A. Shokrollahi.
\newblock {\em Algebraic Complexity Theory}.
\newblock Number 315 in Grundlehren der mathematischen Wissenschaften. Springer
  Verlag, 1997.

\bibitem{Coppersmith82}
D.~Coppersmith.
\newblock Rapid multiplication of rectangular matrices.
\newblock {\em SIAM Journal on Computing}, 11(3):467--471, 1982.

\bibitem{Coppersmith97}
D.~Coppersmith.
\newblock Rectangular matrix multiplication revisited.
\newblock {\em J. Complex.}, 13:42--49, March 1997.

\bibitem{FischerProbert74}
P.~Fischer and R.~Probert.
\newblock Efficient procedures for using matrix algorithms.
\newblock In J.~Loeckx, editor, {\em Automata, Languages and Programming},
  volume~14 of {\em Lecture Notes in Computer Science}, pages 413--427.
  Springer Berlin / Heidelberg, 1974.

\bibitem{GalilPan89}
Z.~Galil and V.~Pan.
\newblock Parallel evaluation of the determinant and of the inverse of a
  matrix.
\newblock {\em Information Processing Letters}, 30(1):41 -- 45, 1989.

\bibitem{HongKung81}
J.~W. Hong and H.~T. Kung.
\newblock {I/O} complexity: The red-blue pebble game.
\newblock In {\em STOC '81: Proceedings of the thirteenth annual ACM symposium
  on Theory of computing}, pages 326--333, New York, NY, USA, 1981. ACM.

\bibitem{HopcroftMusinsky73}
J.~Hopcroft and J.~Musinski.
\newblock Duality applied to the complexity of matrix multiplications and other
  bilinear forms.
\newblock In {\em Proceedings of the fifth annual ACM symposium on Theory of
  computing}, STOC '73, pages 73--87, New York, NY, USA, 1973. ACM.

\bibitem{HopcroftKerr71}
J.~E. Hopcroft and L.~R. Kerr.
\newblock On minimizing the number of multiplications necessary for matrix
  multiplication.
\newblock {\em SIAM Journal on Applied Mathematics}, 20(1):pp. 30--36, 1971.

\bibitem{HuangPan97}
X.~Huang and V.~Y. Pan.
\newblock Fast rectangular matrix multiplications and improving parallel matrix
  computations.
\newblock In {\em Proceedings of the second international symposium on Parallel
  symbolic computation}, PASCO '97, pages 11--23, New York, NY, USA, 1997. ACM.

\bibitem{HuangPan98}
X.~Huang and V.~Y. Pan.
\newblock Fast rectangular matrix multiplication and applications.
\newblock {\em J. Complex.}, 14:257--299, June 1998.

\bibitem{IronyToledoTiskin04}
D.~Irony, S.~Toledo, and A.~Tiskin.
\newblock Communication lower bounds for distributed-memory matrix
  multiplication.
\newblock {\em J. Parallel Distrib. Comput.}, 64(9):1017--1026, 2004.

\bibitem{KaplanSharirVerbin06}
H.~Kaplan, M.~Sharir, and E.~Verbin.
\newblock Colored intersection searching via sparse rectangular matrix
  multiplication.
\newblock In {\em Proceedings of the twenty-second annual symposium on
  computational geometry}, SCG '06, pages 52--60, New York, NY, USA, 2006. ACM.

\bibitem{KeZengHanPan08}
S.~Ke, B.~Zeng, W.~Han, and V.~Pan.
\newblock Fast rectangular matrix multiplication and some applications.
\newblock {\em Science in China Series A: Mathematics}, 51:389--406, 2008.
\newblock 10.1007/s11425-007-0169-2.

\bibitem{Knight95}
P.~Knight.
\newblock Fast rectangular matrix multiplication and {QR} decomposition.
\newblock {\em Linear Algebra and its Applications}, 221(0):69 -- 81, 1995.

\bibitem{KouckyKabanetsKolokolova10}
M.~Koucky, V.~Kabanets, and A.~Kolokolova.
\newblock Expanders made elementary, 2010.
\newblock In preparation, Available from
  {http://www.cs.sfu.ca/$\sim$kabanets/papers/expanders.pdf}.

\bibitem{KratschSpinrad03}
D.~Kratsch and J.~Spinrad.
\newblock Between {$O(nm)$} and {$O(n)$?}
\newblock In {\em Proceedings of the fourteenth annual ACM-SIAM symposium on
  Discrete algorithms}, SODA '03, pages 709--716, Philadelphia, PA, USA, 2003.
  Society for Industrial and Applied Mathematics.

\bibitem{LevValiant83}
G.~Lev and L.~G. Valiant.
\newblock Size bounds for superconcentrators.
\newblock {\em Theoretical Computer Science}, 22(3):233--251, 1983.

\bibitem{LottiRomani83}
G.~Lotti and F.~Romani.
\newblock On the asymptotic complexity of rectangular matrix multiplication.
\newblock {\em Theoretical Computer Science}, 23(2):171 -- 185, 1983.

\bibitem{Mihail89}
M.~Mihail.
\newblock Conductance and convergence of {Markov} chains: A combinatorial
  treatment of expanders.
\newblock In {\em Proceedings of the Thirtieth Annual IEEE Symposium on
  Foundations of Computer Science}, pages 526--–531, 1989.

\bibitem{ReingoldVadhanWigderson00}
O.~Reingold, S.~Vadhan, and A.~Wigderson.
\newblock Entropy waves, the zig-zag graph product, and new constant-degree
  expanders.
\newblock {\em Annals of Mathematics}, 155(1):157--187, 2002.

\bibitem{Savage94}
J.~Savage.
\newblock Space-time tradeoffs in memory hierarchies.
\newblock Technical report, Brown University, Providence, RI, USA, 1994.

\bibitem{Strassen69}
V.~Strassen.
\newblock Gaussian elimination is not optimal.
\newblock {\em Numer. Math.}, 13:354--356, 1969.

\bibitem{YusterZwick04}
R.~Yuster and U.~Zwick.
\newblock Detecting short directed cycles using rectangular matrix
  multiplication and dynamic programming.
\newblock In {\em Proceedings of the fifteenth annual ACM-SIAM symposium on
  Discrete algorithms}, SODA '04, pages 254--260, Philadelphia, PA, USA, 2004.
  Society for Industrial and Applied Mathematics.

\bibitem{YusterZwick05}
R.~Yuster and U.~Zwick.
\newblock Fast sparse matrix multiplication.
\newblock {\em ACM Trans. Algorithms}, 1(1):2--13, 2005.

\bibitem{Zwick02}
U.~Zwick.
\newblock All pairs shortest paths using bridging sets and rectangular matrix
  multiplication.
\newblock {\em J. ACM}, 49:289--317, May 2002.

\end{thebibliography}

\appendix

\section{Details of Bini's and the Hopcroft-Kerr algorithm}\label{app:details}
In this appendix we give the details of Bini's algorithm \cite{BiniCapovaniRomaniLotti79} and the Hopcroft-Kerr algorithm \cite{HopcroftKerr71}.  We provide these for completeness.

We express an algorithm for \(\mnpq mnpq\) matrix multiplication by giving the three adjacency matrices of the encoding and decoding graphs: \(U\) of dimension \(mn\times q\), \(V\) of dimension \(np\times q\), and \(W\) of dimension \(mp\times q\).  The rows of \(U\), \(V\), and \(W\), correspond to the entries of \(A\), \(B\), and \(C\), respectively, in row-major order.  The columns correspond to the \(q\) multiplications.  To be precise, each column of \(U\) specifies a linear combination of entries of \(A\); and each column of \(V\) specifies a linear combination of entries of \(B\).  These two linear combinations are to be multiplied together, and then the corresponding column of \(W\) specifies to which entries of \(C\) that product contributes, and with what coefficient.\footnote{The sparsity of the matrices in this notation correspond loosely to the number of additions and subtractions, but this notation is not sufficient to specify the leading constant hidden in the computational costs.  In particular, this notation does not show the advantage of Winograd's variant of Strassen's algorithm \cite{FischerProbert74} over Strassen's original formulation \cite{Strassen69}.}

\subsection{Bini's algorithm}
We provide all 6 base cases for Bini's algorithm that appear is Section~\ref{sec:examples:bini}.  They are labeled by the shape of the multiplication and which graph is disconnected.  The first algorithm is:
\[
U^{\langle3,2,2\rangle,EncA}=\left[\begin{array}{cccccccccc}
1& 0& 1& 0& 1& 0& 0& 0& 0& 0\\
0& 0& 0& \lambda& \lambda& 0& 0& 0& 0& 0\\
0& 0& 0& 0& 0& 1& 0& 1& 0& 1\\
1& 1& 0& 1& 0& 0& 0& 0& 0& 0\\
0& 0& 0& 0& 0& 0& 0& 0& \lambda& \lambda\\
0& 0& 0& 0& 0& 1& 1& 0& 1& 0
\end{array}\right]\equiv
\left[\begin{array}{c}
U_1\\U_2\\U_3\\U_4\\U_5\\U_6
\end{array}\right]
\] 
\[
V^{\langle3,2,2\rangle,EncA}=\left[\begin{array}{cccccccccc}
\lambda& 0& 0& -\lambda& 0& 1& 1& -1& 1& 0\\
0& 0& 0& 0& \lambda& 0& 0& -1& 0& 1\\
0& -1& 0& 1& 0& 0& 0& 0& \lambda& 0\\
1& -1& 1& 0& 1& \lambda& 0& 0& 0& -\lambda\\
\end{array}\right]\equiv
\left[\begin{array}{c}
V_1\\V_2\\V_3\\V_4
\end{array}\right]
\] 
\[
W^{\langle3,2,2\rangle,EncA}=\left[\begin{array}{cccccccccc}
\lambda^{-1}& \lambda^{-1}& -\lambda^{-1}& \lambda^{-1}& 0& 0& 0& 0& 0& 0\\
0& 0& -\lambda^{-1}& 0& \lambda^{-1}& 0& 0& 0& 0& 0\\
0& 0& 0& 1& 0& 1& 0& 0& -1& 0\\
1& 0& 0& 0& -1& 0& 0& 0& 0& 1\\
0& 0& 0& 0& 0& 0& -\lambda^{-1}& 0& \lambda^{-1}& 0\\
0& 0& 0& 0& 0& \lambda^{-1}& -\lambda^{-1}& \lambda^{-1}& 0& \lambda^{-1}
\end{array}\right]\equiv
\left[\begin{array}{c}
W_1\\W_2\\W_3\\W_4\\W_5\\W_6
\end{array}\right]
\] 
The remaining 5 algorithms can be concisely expressed in terms of the rows of the first algorithm:
\[U^{\langle3,2,2\rangle,DecC}=\left[\begin{array}{c}
W_1\\W_2\\W_3\\W_4\\W_5\\W_6
\end{array}\right]\qquad
V^{\langle3,2,2\rangle,DecC}=\left[\begin{array}{c}
V_1\\V_3\\V_2\\V_4
\end{array}\right]\qquad
W^{\langle3,2,2\rangle,DecC}=\left[\begin{array}{c}
U_1\\U_2\\U_3\\U_4\\U_5\\U_6
\end{array}\right]\qquad
\]

\[U^{\langle2,3,2\rangle,EncA}=\left[\begin{array}{c}
U_1\\U_3\\U_5\\U_2\\U_4\\U_6
\end{array}\right]\qquad
V^{\langle2,3,2\rangle,EncA}=\left[\begin{array}{c}
W_1\\W_2\\W_3\\W_4\\W_5\\W_6
\end{array}\right]\qquad
W^{\langle2,3,2\rangle,EncA}=\left[\begin{array}{c}
V_1\\V_2\\V_3\\V_4
\end{array}\right]\qquad
\]

\[U^{\langle2,3,2\rangle,EncB}=\left[\begin{array}{c}
W_1\\W_3\\W_5\\W_2\\W_4\\W_6
\end{array}\right]\qquad
V^{\langle2,3,2\rangle,EncB}=\left[\begin{array}{c}
U_1\\U_2\\U_3\\U_4\\U_5\\U_6
\end{array}\right]\qquad
W^{\langle2,3,2\rangle,EncB}=\left[\begin{array}{c}
V_1\\V_3\\V_2\\V_4
\end{array}\right]\qquad
\]

\[U^{\langle2,2,3\rangle,EncB}=\left[\begin{array}{c}
V_1\\V_3\\V_2\\V_4
\end{array}\right]\qquad
V^{\langle2,2,3\rangle,EncB}=\left[\begin{array}{c}
U_1\\U_3\\U_5\\U_2\\U_4\\U_6
\end{array}\right]\qquad
W^{\langle2,2,3\rangle,EncB}=\left[\begin{array}{c}
W_1\\W_3\\W_5\\W_2\\W_4\\W_6
\end{array}\right]\qquad
\]

\[U^{\langle2,2,3\rangle,DecC}=\left[\begin{array}{c}
V_1\\V_2\\V_3\\V_4
\end{array}\right]\qquad
V^{\langle2,2,3\rangle,DecC}=\left[\begin{array}{c}
W_1\\W_3\\W_5\\W_2\\W_4\\W_6
\end{array}\right]\qquad
W^{\langle2,2,3\rangle,DecC}=\left[\begin{array}{c}
U_1\\U_3\\U_5\\U_2\\U_4\\U_6
\end{array}\right]\qquad
\]

\subsection{The Hopcroft-Kerr algorithm}
For the Hopcroft-Kerr algorithm we give only 3 of the 6 base cases, since all the graphs are connected.
\[
U^{\langle 3,2,3\rangle}=\left[\begin{array}{ccccccccccccccc}
0& 1& 0& 1& 0& -1& 0& -1& 0& 0& 0& 0& 0& 0& -1\\
1& -1& 0& 0& 1& 0& 0& 1& 0& 1& 0& 0& 0& 0& 1\\
0& 0& 1& 1& 1& 0& 0& 0& 1& 0& 0& 0& -1& 1& 0\\
0& 0& 0& 0& 0& 0& 0& 0& -1& 1& -1& 0& 1& -1& 0\\
0& 0& 0& 0& 0& 1& 1& 1& 0& 0& 0& 0& 1& 0& 1\\
0& 0& 0& 0& 0& 0& 0& 0& 0& 0& 1& 1& -1& 1& -1
\end{array}\right]\equiv
\left[\begin{array}{c}
U_1\\U_2\\U_3\\U_4\\U_5\\U_6
\end{array}\right]
\] 

\[
V^{\langle 3,2,3\rangle}=\left[\begin{array}{ccccccccccccccc}
1& 1& 0& 1& 1& -1& 0& 1& 0& 0& 0& 0& 0& 0& 0\\
0& 0& 1& 1& 0& 0& 0& 0& 0& 0& -1& 0& -1& -1& 0\\
0& 0& 0& 0& 0& 1& 1& 0& & 0& 1& 0& 1& 1& 0\\
1& 0& 0& 0& 0& -1& 0& 1& 0& 1& 0& 0& 0& 0& -1\\
0& 0& 1& 0& -1& 0& 0& 0& 1& 1& -1& 0& 0& -1& 0\\
0& 0& 0& 0& 0& 1& 0& -1& 0& 0& 1& 1& 0& 0& 1
\end{array}\right]\equiv
\left[\begin{array}{c}
V_1\\V_2\\V_3\\V_4\\V_5\\V_6
\end{array}\right]
\] 

\[
W^{\langle 3,2,3\rangle}=\left[\begin{array}{ccccccccccccccc}
1& 1& 0& 0& 0& 0& 0& 0& 0& 0& 0& 0& 0& 0& 0\\
0& -1& -1& 1& -1& 0& 0& 0& 0& 0& 0& 0& 0& 0& 0\\
1& 0& 0& 0& 0& -1& 1& -1& 0& 0& 0& 0& 0& 0& 0\\
-1& 0& 0& 0& 1& 0& 0& 0& 1& 1& 0& 0& 0& 0& 0\\
0& 0& 1& 0& 0& 0& 0& 0& -1& 0& 0& 0& 0& 0& 0\\
0& 0& 1& 0& 0& 0& 0& 0& 0& 0& -1& 1& 0& 1& 0\\
0& 1& 0& 0& 0& 0& 0& 1& 0& 0& 0& 1& 0& 0& 1\\
0& 0& 0& 0& 0& 0& 1& 0& -1& 0& 0& 0& -1& -1& 0\\
0& 0& 0& 0& 0& 0& 1& 0& 0& 0& 0& 1& 0& 0& 0
\end{array}\right]\equiv
\left[\begin{array}{c}
W_1\\W_2\\W_3\\W_4\\W_5\\W_6\\W_7\\W_8\\W_9
\end{array}\right]
\] 

\[U^{\langle 2,3,3\rangle}=\left[\begin{array}{c}
U_1\\U_3\\U_5\\U_2\\U_4\\U_6
\end{array}\right]\qquad
V^{\langle 2,3,3\rangle}=\left[\begin{array}{c}
W_1\\W_2\\W_3\\W_4\\W_5\\W_6\\W_7\\W_8\\W_9
\end{array}\right]\qquad
W^{\langle 2,3,3\rangle}=\left[\begin{array}{c}
V_1\\V_2\\V_3\\V_4\\V_5\\V_6
\end{array}\right]\qquad
\]

\[U^{\langle 3,3,2\rangle}=\left[\begin{array}{c}
W_1\\W_2\\W_3\\W_4\\W_5\\W_6\\W_7\\W_8\\W_9
\end{array}\right]\qquad
V^{\langle 3,3,2\rangle}=\left[\begin{array}{c}
V_1\\V_4\\V_2\\V_5\\V_3\\V_6
\end{array}\right]\qquad
W^{\langle 3,3,2\rangle}=\left[\begin{array}{c}
U_1\\U_2\\U_3\\U_4\\U_5\\U_6
\end{array}\right]\qquad
\]

\end{document}